\newcommand{\FULLPAGE}{
	\typeout{Style Option FULLPAGE Version 2 as of 15 Dec 1988}
	\topmargin 0pt
	\advance \topmargin by -\headheight
	\advance \topmargin by -\headsep
	\textheight 8.9in
	\oddsidemargin 0pt
	\evensidemargin \oddsidemargin
	\marginparwidth 0.5in
	\textwidth 6.5in
}
\theoremstyle{plain}
\newtheorem{theorem}{Theorem}[section]
\newtheorem{proposition}[theorem]{Proposition}
\newtheorem{lemma}[theorem]{Lemma}
\newtheorem{corollary}[theorem]{Corollary}
\theoremstyle{definition}
\newtheorem*{definition*}{Definition}
\newtheorem{definition}[theorem]{Definition}
\theoremstyle{remark}
\newtheorem*{remark*}{Remark}
\newtheorem*{property*}{Property}
\newcommand{\xhdr}[1]{{\vspace{2mm}\noindent\bf #1}}
\newcommand{\R} {\ensuremath{\mathbb{R}}} 
\newcommand{\E} {\ensuremath{\mathbb{E}}} 
\newcommand{\eps}{\ensuremath{\epsilon}}
\newcommand{\cel}[1]{{\lceil {#1} \rceil}}
\newcommand{\flr}[1]{{\lfloor {#1} \rfloor}}
\newcommand{\initOneLiners}{%
 	\setlength{\itemsep}{0pt}
	\setlength{\parsep }{0pt}
  	\setlength{\topsep }{0pt}     	
}
\newenvironment{OneLiners}[1][\ensuremath{\bullet}]
    {\begin{list}
        {#1}
        {\initOneLiners}}
    {\end{list}}
\newcommand{\OMIT}[1]{}
\newcommand{\MyTitlePage}{
\makeatletter
 \let\old@oddfoot\@oddfoot
 \renewcommand{\ps@plain}{%
      \renewcommand{\@oddhead}{}%
      \renewcommand{\@evenhead}{\@oddhead}%
      \renewcommand{\@oddfoot}{\ifnum\thepage=0\else\old@oddfoot \fi}
      \renewcommand{\@evenfoot}{\@oddfoot}}
 \makeatother
 \pagestyle{plain}
 \setcounter{page}{0}
}
\newcounter{MyPropertyCounter}
\newcommand{\BKnote}[1]{} 
\newcommand{\YYnote}[1]{} 
\newcommand{\optrev}{{\rho}}
\newcommand{\feas}{{\mathcal{F}}}
\newcommand{\md}{{\mathtt{MD}}}
\newcommand{\mdm}{{\mathtt{MDM}}}
\newcommand{\given}{\,|\,}
\newcommand{\sgn}{{\operatorname{sgn}}}
\newcommand{\val}{{v}}
\newcommand{\vv}{{\phi}}
\newcommand{\ivv}{{\bar{\vv}}}
\newcommand{\opt}{{\mathsf{opt}}}
\newcommand{\indics}{{\mathbf{1}_{S=\opt}}}
\newcommand{\cutoff}{{Z}}
\newcommand{\indith}{{\mathbf{1}_{\cutoff \leq X^*}}}
\newcommand{\indithc}{{\mathbf{1}_{\cutoff > X^*}}}
\begin{document}
\newcommand{\OurTitle}{On the Ratio of Revenue to Welfare in Single-Parameter Mechanism Design}

\title{\OurTitle}
\author{Robert Kleinberg\thanks{Department of Computer Science, Cornell University, Ithaca, NY, USA. Supported in part by NSF Awards CCF-0643934 and AF-0910940,
a Microsoft Research New Faculty Fellowship, and a Google Research Grant.}
\and Yang Yuan\thanks{Department of Computer Science, Cornell University, Ithaca, NY, USA.}}
\date{}

\maketitle
\begin{abstract}
What fraction of the potential
social surplus in an environment 
can be extracted by a revenue-maximizing
monopolist? We investigate this problem
in Bayesian single-parameter environments
with independent private values. The precise
answer to the question obviously depends on
the particulars of the environment: the
feasibility constraint and the distributions
from which the bidders' private values are 
sampled. Rather than solving the problem in
particular special cases, our work aims to
provide universal lower bounds
on the revenue-to-welfare ratio 
that hold under the most general 
hypotheses that allow for non-trivial such bounds.

Our results can be summarized as 
follows. For general feasibility constraints,
the revenue-to-welfare ratio is at least a constant
times the inverse-square-root of the number of agents,
and this is tight up to constant factors. 
For downward-closed feasibility
constraints, the revenue-to-welfare ratio is bounded
below by a constant.
Both results require the bidders' distributions to
satisfy hypotheses somewhat stronger than regularity;
we show that the latter result cannot avoid this
requirement.
\end{abstract}

\section{Introduction}
\label{sec:intro}

When a firm offers a new service with the potential
to bring utility to a set of users, it is intuitive
that the firm should be able to extract a significant
fraction of that utility as profit. Is this intuition
justified by theory? This fundamental
question about the relation between revenue-maximizing
and welfare-maximizing mechanisms is the focus of our
paper.

The answer to our question depends, among other things,
upon which sets of users may potentially be served.
An exemplary case in which the seller's revenue
is only a small fraction of the social surplus is
a \emph{public project}, in which the only two
alternatives are to serve everyone or to serve no one.
As we shall see in Section~\ref{sec:uniform},
for a public project with $n$ agents having i.i.d.\ values
uniformly sampled from $[0,1]$, the optimal mechanism
provides the seller with revenue $\Theta(\sqrt{n})$,
whereas the expected social surplus generated by serving all
agents is $n/2$.

There is a clear economic intuition as to why the seller's
revenue is so limited in the public project setting: there
is no way to deny service to one agent while serving
another, so an agent's bid is unlikely to influence her
own allocation. Accordingly, it is not possible to charge
agents more than a small fraction of their reported value without
creating an incentive for under-reporting.
Pursuing this intuition further, one would expect
the seller to be able to extract a much larger fraction
of the potential social surplus in \emph{downward-closed}
environments, when the decision to deny service to an
agent may be made on an individual basis.

The foregoing discussion inspires some natural
questions about the relation between revenue-maximizing
and welfare-maximizing mechanisms, that refine the guiding
question presented at the start of the paper.
Can the revenue of the
optimal mechanism ever be less than $c/\sqrt{n}$ times the
expected welfare of the efficient allocation, where $n$
is the number of agents and $c$ is a universal constant?
Under what conditions does this revenue-to-welfare ratio
improve to a constant? Our goal in this paper is to answer
these questions for Bayesian single-parameter environments.

A moment's thought reveals that one must place some
restriction on the distributions from which the agents'
values are sampled, to avoid trivialities. For example,
consider a monopolist selling a single item to an agent
whose value is sampled from
the \emph{equal-revenue distribution}, with
cumulative distribution function
satisfying $F(x) = 1-1/x$ for all $x \geq 1$.
As is well known, the seller cannot extract
more than one unit of revenue,
despite the fact that allocating the item
yields infinite expected welfare
in this case. Thus, even in the extremely
simple setting a single-item auction with one
agent, the seller is not guaranteed any
constant fraction of the social surplus unless
we make further assumptions about the distributions
of agents' values.

A theme running through many of our results is
that the foregoing type of distribution --- one
that prevents the seller in a single-item auction
from extracting a constant fraction of the buyer's
expected value --- is essentially the only type of
distribution that must be excluded in order to
obtain strong lower bounds on the
revenue-to-welfare ratio
under arbitrary feasibility
constraints. To make this more precise, for a non-negative
real-valued random variable $X$ with cumulative distribution
function $F(x)$, let $\optrev(X)$ denote
the seller's optimal revenue when selling an item
to a single agent with private value $X$:
\[
\optrev(X) = \sup_{p \geq 0} \{ p \cdot (1 - F(p) ) \}.
\]
We now define the following two properties of
a distribution.
\begin{definition} \label{def:c-bounded}
For any number $c>0$, we say a random variable $X$ is
\emph{$c$-bounded} if it satisfies $c \cdot \optrev(X) \geq \E[X]$,
and it is \emph{strongly $c$-bounded}
if $\Pr(c \cdot \optrev(X) \geq X) = 1$.
\end{definition}
In other
words, a buyer's value distribution is $c$-bounded if her
expected value is at most $c$ times the revenue that a seller
can earn when selling one item to her, and it is strongly
$c$-bounded if her value is \emph{never} more than $c$ times
the seller's optimal revenue.
Having made these definitions, we can state our main
results. All of them pertain to Bayesian
single-parameter environments in which $n$
agents have independent private values
and the feasibility constraint is specified by
a set system $\feas \subseteq 2^{[n]}$ denoting
the sets of agents that may be simultaneously served.
\begin{theorem} \label{thm:sqrt-approx}
If $\feas$ is arbitrary, and all agents have
strongly $c$-bounded distributions, then the
revenue of the optimal mechanism is at least
$1/(96 c \sqrt{n})$ times the expected welfare
of the efficient allocation. For public project
mechanisms, the same conclusion holds under the
weaker hypothesis that the distributions are
$c$-bounded.
\end{theorem}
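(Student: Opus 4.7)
My approach has two steps: first bound the welfare in terms of a quantity $T$ comparable to sums of single-agent optimal revenues, and then lower-bound the optimal revenue by $T/(\mathrm{const}\cdot\sqrt{n})$. Let $r_i = \optrev(X_i)$ and $T = \max_{S \in \feas} \sum_{i \in S} r_i$. Strong $c$-boundedness gives $X_i \leq cr_i$ almost surely, so $W := \E\bigl[\max_{S \in \feas} \sum_{i \in S} X_i\bigr] \leq cT$. For the public-project statement under the weaker $c$-bounded hypothesis, the only non-trivial feasible set is $[n]$, so $W = \sum_i \E[X_i] \leq c\sum_i r_i = cT$. It therefore suffices to prove $\mathrm{OPT} \geq T/(96\sqrt{n})$.

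Fix $S^* \in \feas$ achieving $\sum_{i \in S^*} r_i = T$, and for each $i \in S^*$ let $p_i$ be a monopoly price, so $r_i = p_i(1-F_i(p_i))$. I would case-split on $\max_{i \in S^*} r_i$. \emph{Case~A: some $i^* \in S^*$ has $r_{i^*} \geq T/\sqrt{n}$.} Consider the single-agent-gated mechanism that allocates $S^*$ precisely when $X_{i^*} \geq p_{i^*}$, charging $p_{i^*}$ to $i^*$ and $0$ to every other member of $S^*$. The allocation rule is coordinate-wise monotone, so the mechanism is dominant-strategy truthful, and its expected revenue is $p_{i^*}(1-F_{i^*}(p_{i^*})) = r_{i^*} \geq T/\sqrt{n}$.

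\emph{Case~B: every $r_i < T/\sqrt{n}$ in $S^*$, so $|S^*| > \sqrt{n}$.} I would use the best mechanism whose allocation is restricted to $\{S^*,\emptyset\}$. By Myerson, this is a public-project mechanism on $S^*$ with expected revenue $\E[Z^+]$, where $Z = \sum_{i \in S^*} \bar{\phi}_i(X_i)$ and $\bar{\phi}_i$ is the ironed virtual value. Since $\E[Z]=0$, $\E[Z^+] = \tfrac{1}{2}\E[|Z|]$. Using $\bar{\phi}_i^+ \geq 0$ and $\E[\bar{\phi}_i^+(X_i)] = r_i$, Jensen gives $\E[\bar{\phi}_i(X_i)^2] \geq \E[(\bar{\phi}_i^+(X_i))^2] \geq r_i^2$, hence
\[ \sigma^2 := \mathrm{Var}(Z) \geq \sum_{i \in S^*} r_i^2 \geq T^2/|S^*| \geq T^2/n. \]

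\textbf{Main obstacle.} The hard step is converting $\sigma \geq T/\sqrt{n}$ into $\E[|Z|] = \Omega(\sigma)$, with a constant strong enough that $\tfrac{1}{2}\E[|Z|] \geq T/(96\sqrt{n})$. Because $Z$ is a sum of independent mean-zero variables, I would apply Paley--Zygmund to $Z^2$ combined with a fourth-moment bound $\E[Z^4] \leq O(\sigma^4)$. The difficulty is that strong $c$-boundedness controls only the positive part, $\bar{\phi}_i(X_i) \leq X_i \leq cr_i$; the negative part $(\bar{\phi}_i)^-$ must be controlled by invoking the concavity of the ironed revenue curve (its slope near quantile~$1$) to bound $|\bar{\phi}_i|$ overall in terms of $cr_i$. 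If a single agent's $\mathrm{Var}(\bar{\phi}_i(X_i))$ dominates $\sigma^2$, one falls back on the Case~A mechanism applied to that agent to recover the bound directly.
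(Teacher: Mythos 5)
Your overall architecture matches the paper's: bound the welfare by $cT$ with $T=\max_{S\in\feas}\sum_{i\in S}r_i$ (pointwise via strong $c$-boundedness in the general case, in expectation for the public project), restrict attention to the all-or-nothing mechanism on a maximizing set $S^*$, and reduce to an anti-concentration statement for the sum of ironed virtual values. Your Case~A is a harmless detour the paper does not need. The genuine gap is the anti-concentration step in Case~B. The Paley--Zygmund route requires an upper bound $\E[Z^4]=O(\sigma^4)$, hence $\sum_i\E[\bar\phi_i(X_i)^4]=O(\sigma^4)$, and no hypothesis of the theorem supplies such a bound. Strong $c$-boundedness controls only the \emph{positive} side ($\bar\phi_i(x)\le x\le cr_i$); your proposed fix for the negative side is false. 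Concavity of the ironed revenue curve $\bar R_i$ gives only $\bar\phi_i\ge -r_i/(1-q)$ at quantile $q$, which is unbounded as $q\to 1$; what is controlled is the \emph{first} moment, $\E[(\bar\phi_i(X_i))^-]\ge \bar R_i(1)-r_i\ge -r_i$, not the second or fourth. One can easily have $\mathrm{Var}(\bar\phi_i(X_i))=\infty$ for a strongly $c$-bounded distribution (e.g.\ $\bar R_i$ behaving like $\sqrt{1-q}$ near $q=1$). Your fallback---revert to Case~A when one agent's variance dominates---does not rescue this: a huge $\mathrm{Var}(\bar\phi_i(X_i))$ comes from the negative tail and implies nothing about $r_i=\E[\bar\phi_i(X_i)^+]$, so there is no single-agent mechanism to fall back on; and even when all variances are comparable, the fourth moments can still swamp $\sigma^4$. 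There is also a smaller slip: $\E[\bar\phi_i(X_i)]=\bar R_i(1)\ge 0$ need not vanish, so $\E[Z]=0$ and the identity $\E[Z^+]=\tfrac12\E|Z|$ require a separate (easy) reduction.

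This obstruction is exactly why the paper does not use a CLT/moment-method argument: it explicitly notes that such methods need second-moment upper bounds that the hypotheses do not provide. Instead it proves a Littlewood--Offord-type inequality (Proposition~\ref{prop:lob.2}, via Erd\H{o}s's Sperner-lemma argument with variables ``medially coupled'' to signs), whose only input is a \emph{lower} bound on the mean absolute deviation from the median, $\mdm(Y_i)\ge\tfrac12\E|Y_i|=z_i$---a first-moment quantity that $r_i$ does control---and then Theorems~\ref{thm:lob.3} and~\ref{thm:lob.4} convert this into $\E[(\sum_i Y_i)^+]\ge\sum_i z_i/(96\sqrt{n})$ with no case split and no upper-tail assumptions. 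To repair your proof you would need to replace the Paley--Zygmund step with an anti-concentration tool of this kind; as written, Case~B does not go through.
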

The following theorem refers to \emph{hyper-regular
distributions}, a mild specialization of regular
distributions whose definition we defer to Section~\ref{sec:prelims}.
All hyper-regular distributions are regular, and while the
converse is not true, it is the case that most of the commonly
cited examples of regular distribution --- including monotone
hazard rate (MHR)
distributions and Pareto distributions --- are hyper-regular.
See the paragraph following Definition~\ref{def:hyper-reg} 
for further discussion of this point.
\begin{theorem} \label{thm:const-approx}
If $\feas$ is downward-closed, and all agents
have $c$-bounded hyper-regular distributions,
then the revenue of the optimal mechanism is at
least $1/c$ times the expected welfare
of the efficient allocation.
\end{theorem}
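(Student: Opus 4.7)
The plan is to invoke Myerson's formula: for regular distributions, the revenue-optimal mechanism extracts expected revenue $\E\bigl[\max_{S \in \feas}\sum_{i \in S}\phi_i^+(X_i)\bigr]$, where $\phi_i$ is the virtual valuation of agent $i$ and $\phi_i^+ = \max(\phi_i,0)$. Because $\feas$ is downward-closed, the efficient set $S^*(X) := \argmax_{S \in \feas}\sum_{i \in S} X_i$ is itself feasible almost surely, so its intersection with $\{i : \phi_i(X_i) \geq 0\}$ is also feasible, and the optimal revenue is at least $\E\bigl[\sum_{i \in S^*(X)}\phi_i^+(X_i)\bigr]$. It therefore suffices to establish, for each agent $i$, the per-agent inequality
\[
\E\bigl[\phi_i^+(X_i)\,\mathbf{1}[i \in S^*(X)]\bigr]\;\geq\;\tfrac{1}{c}\,\E\bigl[X_i\,\mathbf{1}[i \in S^*(X)]\bigr];
\]
summing over $i$ then yields the theorem.

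The next step is to observe that $\{i \in S^*(X)\}$ is a threshold event in $X_i$ once $X_{-i}$ is fixed. Since $\feas$ is downward-closed, any feasible $S \ni i$ has $S \setminus \{i\}$ feasible, and so the gain in welfare from adjoining $i$ to the best feasible set avoiding $i$ has the simple form $X_i - T_i(X_{-i})$ for a non-negative threshold $T_i(X_{-i})$ depending only on $X_{-i}$; thus $i \in S^*(X) \iff X_i > T_i(X_{-i})$ almost surely. Conditioning on $X_{-i}$ and using independence of $X_i$, the per-agent inequality reduces to the one-dimensional statement: for every $t \geq 0$,
\[
\E\bigl[\phi_i^+(X_i)\,\mathbf{1}[X_i \geq t]\bigr]\;\geq\;\tfrac{1}{c}\,\E\bigl[X_i\,\mathbf{1}[X_i \geq t]\bigr].
\]

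To verify this, I would apply the standard Myerson identity $\E[\phi_i(X_i)\mathbf{1}[X_i \geq s]] = s\,\Pr(X_i \geq s)$ (which comes from integration by parts) and split on the monopoly reserve $r_i$, the unique zero of $\phi_i$. If $t \leq r_i$, the left-hand side equals $r_i\,\Pr(X_i \geq r_i) = \optrev(X_i) \geq \E[X_i]/c \geq \E[X_i\mathbf{1}[X_i \geq t]]/c$, using the $c$-bounded hypothesis. If $t \geq r_i$, the left-hand side equals $t\,\Pr(X_i \geq t)$, so the required inequality becomes $\E[X_i \mid X_i \geq t] \leq c\,t$.

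The main obstacle is precisely this tail-expectation bound, uniform over all $t \geq r_i$, and it is where the hyper-regular hypothesis must do its work (ordinary regularity is not enough, as the equal-revenue example in the introduction indicates). The cleanest way for the definition in Section~\ref{sec:prelims} to imply it would be to guarantee that $\E[X_i \mid X_i \geq t]/t$ is non-increasing in $t$ for $t \geq r_i$; then that quantity is bounded above by its value at $t = r_i$, which equals $\E[X_i\mathbf{1}[X_i \geq r_i]]/\optrev(X_i) \leq \E[X_i]/\optrev(X_i) \leq c$ by $c$-boundedness. Pinning down the formal definition of hyper-regular and checking that it supplies this non-increasing tail-ratio property (and that the MHR and Pareto families indeed satisfy it) is where I expect the substantive work of the proof to lie.
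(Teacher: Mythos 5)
Your proposal is correct, and while it shares the paper's overall skeleton---lower-bound the optimal revenue by the virtual surplus of the efficient set with negative-virtual-value agents removed (this is exactly the VCG-L mechanism, and downward-closedness is what makes it feasible), then prove a per-agent inequality $\E[\phi_i^+(X_i)\mathbf{1}[i\in\opt]]\ge \tfrac1c\E[X_i\mathbf{1}[i\in\opt]]$ by exploiting monotonicity of the event $\{i\in\opt\}$ in $X_i$---the key technical mechanism is genuinely different. The paper works with the conditional probability $g_i(x)=\Pr(i\in\opt\mid X_i=x)$, notes it is non-decreasing, and invokes a generalization of Chebyshev's integral inequality ($\E[f(X)g(X)h(X)]/\E[g(X)h(X)]\ge\E[f(X)h(X)]/\E[h(X)]$ for $f,g$ non-decreasing, $h\ge 0$) with $f=\phi_i^+(x)/x$, $h=x$. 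You instead condition on $X_{-i}$ to turn the event into a threshold event $\{X_i\ge t\}$ and reduce to a one-dimensional tail inequality; that inequality is precisely the special case $g=\mathbf{1}[x\ge t]$ of the paper's lemma, but you prove it by the Myerson identity plus a tail-ratio monotonicity argument rather than by the correlation inequality. Your route is more elementary and self-contained; the paper's is slightly more robust in that it never needs to exhibit the threshold explicitly (and its correlation lemma is reusable). The one step you deferred does go through: the paper's definition of hyper-regular is that $\phi(x)/x$ is non-decreasing, and this does yield your non-increasing tail ratio. Writing $N(t)=\E[X\mathbf{1}[X\ge t]]$ and $D(t)=t(1-F(t))=\E[\phi(X)\mathbf{1}[X\ge t]]$, one computes $N'(t)=-tf(t)$ and $D'(t)=-f(t)\phi(t)$, so
\[
\left(\frac{N}{D}\right)'\le 0
\iff \frac{\phi(t)}{t}\,N(t)\le D(t)
\iff \frac{\phi(t)}{t}\,\E\!\left[X\mathbf{1}[X\ge t]\right]\le \E\!\left[\phi(X)\mathbf{1}[X\ge t]\right],
\]
and the last inequality is immediate from $\phi(x)/x\ge\phi(t)/t$ for $x\ge t$. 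Evaluating the ratio at $t=r_i$ gives $N(r_i)/D(r_i)\le\E[X_i]/\optrev(X_i)\le c$ by $c$-boundedness, exactly as you anticipated. (Two small points to tidy up in a full write-up: handle ties in the welfare-maximizing set with a consistent tie-breaking rule so that membership really is monotone in $X_i$, a caveat the paper also glosses over; and note that for non-negative thresholds only $t\ge 0$ matters, so your case split at $r_i$ covers everything.)
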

We further show that the assumption of hyper-regularity
is unavoidable in Theorem~\ref{thm:const-approx}, even
when dealing with single-item auctions. We give an explicit
example of a regular (but not hyper-regular) distribution
$F$ such that as $n \to \infty$, the ratio of
the optimal revenue to the maximum bid tends to zero in
a single-item auction with $n$ i.i.d.\ bidders sampling
values from $F$.

\OMIT{
The factor $1/(96 c \sqrt{n})$ in Theorem~\ref{thm:sqrt-approx} 
may appear disappointingly weak, especially if judged by the standards
that computer scientists use when judging approximation ratios of
algorithms. (An approximation ratio close to 1 is desirable,
a constant or polylogarithmic approximation ratio is sometimes
considered tolerable, and approximation ratios of the form
$n^{\eps}$ for $\eps>0$ tend to be deemed intolerably large.)
However,  since our goal in this paper is merely to answer a
question about mechanism design (``What is the minimum ratio
of revenue to welfare?'') and not to design mechanisms that closely
approximate some hypothetical ideal, 
we deem an answer to be strong if it expresses a tight bound
on the ratio, whether or not this bound is close to 1.
According to this criterion, 
the bounds in Theorems~\ref{thm:sqrt-approx} and~\ref{thm:const-approx}
are quite strong because both of them are tight up to constant
factors. A potentially useful analogy is the isoperimetric 
inequality for $n$-dimensional Euclidean space, which asserts
that a region in $n$-space with unit volume has surface area 
at least $\sqrt{\pi} n \Gamma \left( n/2 + 1 \right)^{-1/n}$,
attained when the region is an $n$-dimensional ball. This is neither
good news nor bad news, it is simply a fact about geometry. Both the
inequality itself, and the fact that the minimizer is a ball, turn out
to be quite useful geometric facts. In the same spirit, the fact that
the ratio of revenue to welfare in single-parameter domains with
strongly $c$-bounded distributions is never less than $\Theta(1/\sqrt{n})$,
attained when the good is non-excludable, is simply a fact about optimal
mechanism design, not a piece of good or bad news.
}

\BKnote{Write a paragraph here about why $\sqrt{n}$
approximation results are not ``impractical'' in this
setting. We are not designing algorithms or mechanisms
here. The observation that sellers in public projects
can only get $n^{-1/2}$ fraction of the welfare is a
fact about public projects, no more nor less.
}

To derive our results, we use a mix of techniques from
economics and probability theory. Not surprisingly, we
rely heavily on Myerson's Lemma that the expected revenue of a
mechanism equals its expected virtual surplus. We then face the
task of proving lower bounds on the expected virtual surplus of
the optimal mechanism. It turns out that this task is closely
tied to proving \emph{anti-concentration} inequalities for
sums of independent random variables, i.e.\ inequalities asserting
that the sum is unlikely to be too \emph{close} to its
expected value. We derive an anti-concentration inequality
suited to our application by generalizing Erd\H{o}s's proof
of the Littlewood-Offord Theorem~\cite{erdos,littlewood-offord}.
This inequality constitutes the main technical ingredient
underlying Theorem~\ref{thm:sqrt-approx}. To obtain
Theorem~\ref{thm:const-approx} we generalize a different
tool from probability theory, namely Chebyshev's Integral
Inequality.

\xhdr{Related work.}
Many prior papers address relationships between 
revenue-maximizing and welfare-maximizing mechanisms 
in Bayesian settings. All of these papers are 
thematically related to our work, and some of them
contain theorems that directly imply bounds on
the revenue-to-welfare ratio for special cases
of the settings considered here, though usually
as a side effect of attacking other questions.
For example, the famous work of 
\citet{Bulow-Klemperer} shows that the
revenue of the Vickrey single-item
auction with $n+1$ i.i.d.\ bidders exceeds
that of the optimal single-item auction with
$n$ i.i.d.\ bidders drawn from the same distribution,
provided the distribution is regular. (Note the 
constrast with our work: theirs relates the \emph{revenue}
of a VCG auction to that of an
optimal auction, whereas our work relates the 
\emph{efficiency} of a VCG auction to the
\emph{revenue} of an optimal auction.)
Drawing inspiration from Bulow and Klemperer
while significantly expanding upon their
techniques, \citet{Dhang2010} designed
\emph{single sample mechanisms} and proved ---
under various hypotheses on the feasibility
constraints and the distributions --- that their
mechanism's revenue approximates that of the optimal
mechanism. All of the environments considered in 
their paper have downward-closed feasibility 
constraints, unlike our paper that also addresses
general feasibility constraints. 
Of particular relevance to our work
is Theorem 3.10 of \citep{Dhang2010}, 
which directly bounds the revenue-to-welfare
ratio of the ``VCG with lazy reserves'' (VCG-L)
mechanism in downward-closed environments with
MHR distributions. Our 
Theorem~\ref{thm:const-approx} can be 
seen as a generalization of their Theorem 3.10
from MHR distributions to hyper-regular
distributions.\BKnote{We should actually put a 
corollary after the proof of 
Theorem~\ref{thm:const-approx},
in Section 5, noting that the 
same approximation bound applies 
when one uses VCG-L instead of an
optimal mechanism.}

Other extensions of the Bulow-Klemperer Theorem
in recent years have contributed to the 
literature on relations between revenue-maximizing
and welfare-maximizing auctions.
For example, \citet{HR09} consider duplicating
\emph{each} bidder, and they bound the ratio
between  the
revenue of the VCG mechanism  in the 
``duplicated environment'' and that
of the optimal mechanism in the 
original environment; this technique
is then used to imply that simple
mechanisms that modify VCG by adding
reserve prices can approximate the
revenue of the optimal mechanism.
Extending Bulow-Klemperer in a different direction, \citet{AGM09}
show that adding $O(\log n)$ additional bidders
to Myerson's mechanism (in an i.i.d.\ m.h.r.\ single-item
environment) is necessary and sufficient to achieve an
expected welfare guarantee that matches that of the 
VCG mechanism with the original $n$ bidders.

Other papers contributing to the literature on 
relationships between welfare-maximizing and
revenue-maximizing mechanisms are
\citep{Daskalakis2011}, which presents
auctions that simultaneously achieve good
revenue and efficiency for 
single-item environments, 
and \citep{hajek2010},
which considers the efficiency loss
in revenue-maximizing mechanisms.

Our paper is not the first to use the 
Littlewood-Offord Theorem and its
generalizations to bound the revenue
of mechanisms. A different generalization
of Littlewood-Offord was applied by 
\citet{karlin2012} to the analysis of 
prior-free mechanisms.

\section{Preliminaries}
\label{sec:prelims}

\xhdr{Single-parameter Bayesian mechanism design.}
In a standard single-parameter Bayesian mechanism design setting,
there are $n$ bidders or agents, each with a
private value $\val_i, i=1,\ldots,n$,
denoting the value of agent $i$ for receiving service.
We will denote the cumulative distribution function
of $\val_i$ by $F_i$, and when we assume that $\val_i$
has a density function we will denote the density function
by $f_i$.

A general feasibility environment is specified by
a set $\feas \subseteq 2^{[n]}$ denoting the
feasible sets of bidders that can be simultaneously served.
We call $\feas$ the \emph{feasibility constraint} of the
environment. We say $\feas$ is \emph{downward-closed}
if every subset of a feasible set is feasible.

A mechanism is a pair $(A,p)$ consisting of an
\emph{allocation function} $A: \R^n \to \{0,1\}^n$
and a \emph{payment function} $p : \R^n \to \R^n$.
Both functions may possibly be randomized.
The input to both functions is a vector of bids.
The function $A$ determines the set of agents
who will be served;
thus we require that $\{i : A_i(b)=1\}$ belongs to
$\feas$ for every possible bid vector $b$. 
The payment function $p$ determines how much
each agent will pay.
Agents are risk-neutral
and have quasi-linear utility: an agent with
value $\val_i$ who is served with probability
$\pi_i$ and pays $p_i$ has utility $\pi_i \val_i - p_i$.

The \emph{expected revenue} (or simply \emph{revenue})
of a mechanism is $\E [ \sum_{i=1}^n p_i(b) ]$
where $b$ is the random bid vector in some equilibrium
of the mechanism. Its \emph{expected welfare}
(or simply \emph{welfare}) is
$\E[ \sum_i A_i(b) \val_i ]$,
the expected sum of values of the agents served.
In both cases, the expectation is over the
randomness in the agents' private values,
as well as the randomness (if any) in their
choice of bids and in the mechanism's choice
of allocations and payments.
All mechanisms in this paper are assumed to be
\emph{ex post individually rational}, meaning
that agents are never charged an amount
exceeding their bid.



\xhdr{Probability distributions.}
When $X$ is a random variable,
we denote by $X^+=\max\{0,X\}$ the ``positive part'' of $X$,
and by $X^-=\min\{0,X\}$ the ``negative part'' of $X$.

The hazard rate of a distribution is defined as $h(x)=\frac{f(x)}{1-F(x)}$, and
a monotone hazard rate (MHR) distribution is one whose  hazard rate
is non-decreasing.
The virtual valuation function corresponding to distribution $F$ is
$\vv(x)=x-\frac{1}{h(x)}$.
Distributions with non-decreasing
virtual valuation function are called regular distributions.
In the sequel, we will use the following strengthening of the
regularity property.
\begin{definition}[Hyper-regular Distribution] \label{def:hyper-reg}
A hyper-regular distribution is a regular distribution with non-decreasing $\frac{\phi(x)}{x}$.
\end{definition}
Most of the common examples of regular distributions are actually
hyper-regular. For example,
it is easy to see that all MHR distributions are hyper-regular.
Also, Pareto distributions having cumulative density function
$F(x)=1-x^{-\alpha}$, where $\alpha>1$ (a necessary condition
for the distribution to be regular, and also for it 
to have finite expected value) are hyper-regular.
Not all regular distributions are hyper-regular;
for example, the distribution specified by
$F(x-\delta)=1- \frac{1}{x\ln^2 x}$, where
$\delta \ln^2 \delta=1$, is not hyper-regular.
We will return to this distribution at the end
of Section~\ref{sec:dc}.

\xhdr{Myerson's lemma.}

\citet{myerson1981} gave a connection between the expected revenue and the
expected virtual surplus.
\begin{lemma}[Myerson's Lemma]
In a truthful mechanism $(A,p)$ 
the expected payment $p_i$ of agent $i$ with virtual valuation
function $\phi_i$ satisfies:
\[
\E[p_i(\mathbf{v})]= \E [\phi_i(v_i) \cdot A_i(\mathbf{v})]\]
The equality holds even when the bids of other bidders $v_{-i}$ are fixed.
\end{lemma}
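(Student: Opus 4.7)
The plan is to use ex post incentive compatibility to derive a closed-form expression for the payment rule in terms of the allocation rule, and then take expectations and apply Fubini's theorem to rewrite the result in terms of the virtual valuation function.

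First, I would fix the reports $v_{-i}$ of the other agents and define the induced single-variable allocation $x_i(v) := A_i(v, v_{-i})$ and payment $P_i(v) := p_i(v, v_{-i})$. Truthfulness gives, for every pair of values $v, v'$, that the agent with value $v$ weakly prefers reporting $v$ to $v'$ and vice versa. Adding the two resulting inequalities shows $(v-v')(x_i(v)-x_i(v')) \geq 0$, i.e.\ $x_i$ is monotone non-decreasing. Sending $v' \to v$ in the same inequalities (or integrating the envelope relation between them) pins down $P_i$ up to an additive constant, and ex post individual rationality together with $x_i(0) \geq 0$ forces the constant to be zero, giving the familiar payment identity
\[
P_i(v) = v \cdot x_i(v) - \int_0^v x_i(t)\, dt.
\]

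Next, I would take the expectation over $v_i \sim F_i$. The key computation is to apply Fubini's theorem to swap the order of integration in
\[
\E_{v_i}\!\left[\int_0^{v_i} x_i(t)\, dt\right] = \int_0^\infty (1 - F_i(t))\, x_i(t)\, dt,
\]
and then combine this with $\E_{v_i}[v_i \cdot x_i(v_i)] = \int_0^\infty t\, x_i(t)\, f_i(t)\, dt$. Factoring $f_i(t)$ out of $t f_i(t) - (1-F_i(t))$ yields exactly $\phi_i(t) f_i(t)$ by the definition of the virtual valuation, so
\[
\E_{v_i}[P_i(v_i)] = \int_0^\infty x_i(t)\,\phi_i(t)\, f_i(t)\, dt = \E_{v_i}[\phi_i(v_i)\, x_i(v_i)].
\]
Since $v_{-i}$ was an arbitrary fixed vector, this gives the identity conditionally on $v_{-i}$; integrating over $v_{-i}$ then yields the unconditional form.

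The main obstacle is handling boundary conditions and integrability rigorously: one must verify that $P_i(0) = 0$ genuinely follows from ex post IR (otherwise the payment is determined only up to an additive constant and the identity fails by a term $\E[\mathbf{1}]\cdot P_i(0)$), and one must justify the Fubini step for a monotone but possibly discontinuous allocation $x_i$. When the distribution lacks a density or has unbounded support, the cleanest route is to carry out the argument with Lebesgue–Stieltjes integrals against $dF_i$, after which the definition of $\phi_i$ can be applied wherever $f_i$ exists and a standard limiting argument covers atoms and flat pieces of $F_i$.
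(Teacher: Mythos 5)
The paper does not prove this lemma itself---it is quoted as a classical result of Myerson (1981)---and your argument is precisely the standard derivation (monotonicity from incentive compatibility, the envelope/payment identity with the $P_i(0)=0$ normalization, then Fubini and the identity $t f_i(t)-(1-F_i(t))=\phi_i(t)f_i(t)$), so it is correct and consistent with the source. The only points to tidy are the ones you already flag, plus two conventions the lemma implicitly assumes: for a randomized mechanism $x_i(v)$ should be the \emph{expected} allocation $\E[A_i(v,v_{-i})]$, and pinning the constant to zero needs ``no positive transfers'' ($P_i(0)\ge 0$) in addition to ex post IR ($P_i(0)\le 0$).
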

Thus, when virtual surplus maximization induces a monotone allocation rule, this allocation rule maximizes revenue. This criterion is always satisfied
when bidders' values are drawn from regular distributions. When
the distributions are not regular, Myerson provides a workaround:
an \emph{ironed virtual valuation function} $\ivv_i$ for each
bidder, which is always monotone, such that Myerson's Lemma
continues to hold provided that the allocation rule is constant
on any interval in which the bidder's ironed virtual value is constant.
Ironed virtual surplus maximization induces a monotone
allocation rule, and a mechanism with this allocation rule
maximizes revenue.

To maximize the welfare, we can use the well-known VCG mechanism.
In this paper we also use a variation of the VCG mechanism called
``VCG with lazy reserves'', or simply VCG-L~\citep{Dhang2010}, which operates as follows:
\begin{enumerate}\itemindent=15pt
\item Run the VCG mechanism to obtain a preliminary winning set $P$.
\item Remove all the bidders $i\in P$ with $v_i< r_i$, where $r_i=\phi_i^{-1} (0)$ is the reserve price for the bidder $i$.
\item Charge each winning bidder $i$ the larger of $r_i$ and its VCG payment in the first
step.
\end{enumerate}

\section{Warm-up: Identical uniform distributions}
\label{sec:uniform}

As a prelude to our main results, we devote this section
to bounding the revenue-to-welfare ratio when the bids
are i.i.d.\ uniform samples from [0,1]. The results
in this section will be completely subsumed by subsequent
theorems, but they have much simpler proofs that serve
to illustrate the main ideas underlying our later results
while highlighting the technical challenges that must be
overcome in order to prove those more general results.

The uniform distribution on [0,1] has a very simple 
virtual valuation function. We have $F(x)=x$ and $f(x)=1$
for all $x \in [0,1]$, and so
\[
\vv(x) = x - \frac{1-F(x)}{f(x)} = x - (1-x) = 2x-1.
\]
The following simple consequence is important for our analysis.
\begin{equation} \label{eq:unif-vv}
\mbox{If $x$ is uniformly distributed in $[0,1]$ then
$\vv(x)$ is uniformly distributed in $[-1,1]$.}
\tag{*}
\end{equation}
Let us first use these observations to 
derive an asymptotic expression for
the revenue-to-welfare ratio for a 
public project with $n$ i.i.d.\ uniform [0,1] bids.
The allocation that provides service to all bidders
also maximizes welfare, so the expected welfare of
the efficient allocation is simply:
$\E[x_1+\cdots+x_n] = \tfrac{n}{2}.$
The virtual surplus is maximized by serving 
everyone if $\phi_1(x_1)+\cdots+\phi_n(x_n) \geq 0$,
and otherwise by serving no one. Therefore,
the optimal mechanism's revenue is
$\E[(\phi_1(x_1) + \cdots + \phi_n(x_n))^+\}]$.
An asymptotic formula for this expression 
can readily be computed
using the Central Limit Theorem. The random
variables $\phi_i(x_i)$ are i.i.d.\ uniform
samples from $[-1,1]$, so they have
mean zero and variance $\sigma^2 = \frac13$.
Consequently the random variable
$n^{-1/2} \sum_{i=1}^n \phi_i(x_i)$
converges in distribution to $\mathcal{N}(0,1/3)$,
and thus
\[
\lim_{n \to \infty} \left\{
\tfrac{1}{\sqrt{n}} 
\E \left[ \left( \phi_1(x_1) + \cdots + \phi_n(x_n) \right)^+ \right]
\right\} = \tfrac{1}{\sqrt{6 \pi}} \int_0^\infty t e^{-t^2/2} \, dt = 
\tfrac{1}{\sqrt{6 \pi}}.
\]
Recalling that the expectation of the maximum welfare in 
this case is $n/2$, we see that the 
revenue-to-welfare ratio is asymptotic to
$\sqrt{\frac{2}{3 \pi n}}$,
and in particular it is $\Theta(n^{-1/2})$.
\OMIT{
 In principle
one could apply the Berry-Esseen Theorem, which strengthens
the Central Limit Theorem by bounded the error in the
asymptotic expression for finite $n$; this would allow us
to derive an
explicit constant $\alpha$ such that the revenue-to-welfare 
ratio is bounded below by $\alpha n^{-1/2}$ for all $n$, and
not just for large $n$. The analysis of public projects
in Section~\ref{sec:pub_proj} will present an alternative
means of deriving such a constant $\alpha$.
}

Let us now generalize to arbitrary feasibility constraints.
Intuitively, it seems that the
revenue-to-welfare ratio should be minimized 
by the public project environment, for the 
reasons articulated in the introduction.
The following proposition confirms that this
intuition is valid, at least up to a constant 
factor.\footnote{In fact, the argument given in the proof shows
that the public project minimizes the revenue-to-welfare
ratio up to a factor of 2. It is an interesting open
question whether the revenue-to-welfare ratio is 
\emph{precisely} minimizes by the public project.}
\begin{proposition} \label{prop:unif-gen}
For a Bayesian single-parameter environment with
$n$ i.i.d.\ bidders having uniform $[0,1]$ values,
and a general feasibility constraint $\feas$,
the revenue-to-welfare ratio is always
at least $\Omega(n^{-1/2})$.
\end{proposition}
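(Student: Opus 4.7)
The plan is to pick a feasible set $T \in \feas$ of maximum cardinality, say $k = |T| \le n$, and compare the revenue and welfare through this ``biggest'' set. Since the uniform distribution on $[0,1]$ is regular, Myerson's Lemma identifies the optimal revenue with $R = \E\bigl[\max_{S \in \feas} \sum_{i \in S} \phi_i(v_i)\bigr]$; restricting the max to the choice between $\emptyset$ and $T$ (the seller always has the option of serving no one) gives
\[
R \;\ge\; \E\!\left[ \Bigl( \sum_{i \in T} \phi_i(v_i) \Bigr)^{+} \right].
\]
On the welfare side, every $v_i \in [0,1]$ and every feasible set has at most $k$ elements, so
\[
W \;=\; \E\!\left[ \max_{S \in \feas} \sum_{i \in S} v_i \right] \;\le\; k.
\]

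Next I would quantify the virtual-surplus lower bound. By observation (*), the summands $\phi_i(v_i) = 2 v_i - 1$ are i.i.d.\ uniform on $[-1,1]$, with mean $0$ and variance $\tfrac13$. Their sum $S_k$ is symmetric, so $\E[S_k^+] = \tfrac12 \E|S_k|$, and a routine moment computation lower-bounds $\E|S_k|$: by Cauchy--Schwarz, $(\E S_k^2)^2 \le \E|S_k|\cdot \E|S_k|^3$, and the fourth-moment formula for sums of i.i.d.\ mean-zero random variables gives $\E S_k^4 \le c_1 k^2$, hence $\E|S_k|^3 \le (\E S_k^4)^{3/4} \le c_1' k^{3/2}$. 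Combining with $\E S_k^2 = k/3$ yields $\E|S_k| \ge c_2 \sqrt{k}$ for a universal $c_2 > 0$ valid for every $k \ge 1$. Therefore
\[
\frac{R}{W} \;\ge\; \frac{\tfrac12 c_2 \sqrt{k}}{k} \;=\; \frac{c_2}{2\sqrt{k}} \;\ge\; \frac{c_2}{2\sqrt{n}},
\]
the required $\Omega(n^{-1/2})$ bound.

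The main obstacle is the uniform-in-$k$ anti-concentration estimate $\E|S_k| = \Omega(\sqrt{k})$: the Central Limit Theorem gives this asymptotically but not for every $k \ge 1$, so it has to be settled by an elementary moment argument (or by Berry--Esseen together with a finite check at small $k$). The choice of $T$ as a \emph{maximum}-cardinality feasible set is what makes the bound tight: welfare scales as $\Theta(k)$ because values are bounded, whereas the virtual surplus on $T$ scales only as $\Theta(\sqrt{k})$ by anti-concentration, and taking the largest $k \le n$ produces the $\Theta(1/\sqrt{n})$ ratio matching the public-project calculation earlier in the section.
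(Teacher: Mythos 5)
Your proof is correct, and its skeleton is identical to the paper's: pick a maximum-cardinality feasible set $T$ with $|T|=k$, bound the welfare above by $k$ using $v_i\le 1$, and bound the revenue below by that of the restricted mechanism that serves either $\emptyset$ or $T$, i.e.\ by $\E\bigl[(\sum_{i\in T}\phi_i(v_i))^+\bigr]$, which is the optimal public-project revenue on $T$. Where you diverge is in how the key estimate $\E\bigl[(\sum_{i\in T}\phi_i(v_i))^+\bigr]=\Omega(\sqrt{k})$ is justified. The paper simply cites its earlier Central Limit Theorem computation, which gives the constant $\sqrt{1/(6\pi)}$ only asymptotically as $k\to\infty$; you correctly identify that one needs a bound uniform in $k\ge 1$ (since $k$ need not grow with $n$) and supply it via the elementary second/fourth-moment argument $(\E S_k^2)^2\le \E|S_k|\cdot\E|S_k|^3\le \E|S_k|\cdot(\E S_k^4)^{3/4}$ together with $\E S_k^2=k/3$ and $\E S_k^4=O(k^2)$. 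This is sound (the moment computations check out) and is in fact more self-contained and more rigorous at finite $k$ than the paper's appeal to the asymptotic formula; the paper's general machinery for this purpose (the Littlewood--Offord generalization in Section 4) is only invoked for non-identical, non-uniform distributions, where your fourth-moment argument would break down because it requires control of higher moments. For the i.i.d.\ uniform warm-up, your route buys an explicit constant valid for every $k$ at essentially no extra cost.
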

\begin{proof}
Among all feasible sets, let 
$S^*$ be one with maximum cardinality,
$k = |S^*|$. We will show that the revenue-to-welfare
ratio is $\Omega(k^{-1/2})$, from which the 
proposition follows \emph{a fortiori}.

As the bidders' values are never
greater than 1, the welfare of the efficient allocation 
is never greater than $k$. 
Consider a mechanism $\mathcal{M}$ 
which maximizes revenue subject to the 
constraint that the set of agents served is always
either $\emptyset$ or $S^*$. This is simply
an optimal mechanism for a public project with agent set $S^*$, 
so we have already calculated that its revenue is
$\Theta(k^{1/2})$. The revenue of the optimal
mechanism is at least as great as that of $\mathcal{M}$,
hence the revenue-to-welfare ratio is 
$\Omega(k^{1/2} / k) = \Omega(k^{-1/2})$,
as claimed.
\end{proof}

When the feasibility constraint is downward closed,
and bids are i.i.d.\ uniform in [0,1],
an even easier argument establishes that the 
revenue-to-welfare ratio is $\Omega(1)$.
\begin{proposition} \label{prop:unif-dc}
For a Bayesian single-parameter environment with
$n$ i.i.d.\ bidders having uniform $[0,1]$ values,
and a downward-closed feasibility constraint $\feas$,
the revenue-to-welfare ratio is always
at least $\frac14$.
\end{proposition}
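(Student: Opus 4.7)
The plan is to show that a specific incentive-compatible mechanism already extracts at least $W/4$ in expected revenue, where $W$ denotes the expected welfare of the efficient allocation. Since the Myerson-optimal mechanism does no worse, this suffices. The natural candidate is VCG with lazy reserves (VCG-L) using the uniform monopoly reserve $r_i = \tfrac{1}{2}$ for every bidder; note that $\tfrac{1}{2}$ is the zero of the virtual valuation $\phi(x) = 2x-1$ for the uniform $[0,1]$ distribution.

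For a given bid vector, let $S^*_w$ denote the VCG (welfare-maximizing feasible) set and let $H = \{i : x_i \geq \tfrac{1}{2}\}$. The lazy-reserve step trims from $S^*_w$ the bidders with $x_i < \tfrac{1}{2}$, so the winning set is $S^*_w \cap H$; downward-closedness of $\feas$ is exactly what is needed to guarantee that $S^*_w \cap H \in \feas$. Because every surviving winner is charged at least the reserve $\tfrac{1}{2}$, the expected revenue of VCG-L is at least $\tfrac{1}{2}\,\E[|S^*_w \cap H|]$. Using $x_i \leq 1$ to bound $W = \E[\sum_{i \in S^*_w} x_i] \leq \E[|S^*_w|]$, it then suffices to prove the combinatorial inequality $\E[|S^*_w \cap H|] \geq \tfrac{1}{2}\E[|S^*_w|]$.

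By linearity, this reduces to the per-bidder claim $\Pr[i \in S^*_w,\,x_i \geq \tfrac{1}{2}] \geq \tfrac{1}{2}\,\Pr[i \in S^*_w]$. The standard monotonicity of welfare-maximizing allocations --- a one-line exchange argument on $S^*_w$ versus $S^*_w \setminus \{i\}$ shows that raising $x_i$ cannot remove $i$ from the welfare-optimal set --- gives, for each fixed $x_{-i}$, a threshold $T_i = T_i(x_{-i}) \in [0,1]$ with $i \in S^*_w \iff x_i \geq T_i$. Conditional on $x_{-i}$, I then have $\Pr[i \in S^*_w] = (1-T_i)^+$ and $\Pr[i \in S^*_w \cap H] = (1 - \max(T_i,\tfrac{1}{2}))^+$. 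A short case check (on whether $T_i < \tfrac{1}{2}$ or $T_i \geq \tfrac{1}{2}$) yields the pointwise bound $(1 - \max(T_i, \tfrac{1}{2}))^+ \geq \tfrac{1}{2}(1-T_i)^+$. Taking expectations and summing over $i$ chains the inequalities to give $\text{Revenue} \geq \tfrac{1}{2}\E[|S^*_w \cap H|] \geq \tfrac{1}{4}\E[|S^*_w|] \geq \tfrac{1}{4}W$.

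I do not expect serious obstacles: the two moving parts are the monotonicity of the VCG allocation and the elementary case analysis on $T_i$, each of which is routine. The hypothesis that $\feas$ is downward-closed enters in exactly one place --- licensing the lazy removal of sub-reserve bidders from $S^*_w$ --- which is precisely the structural feature that distinguishes this constant-factor bound from the $\Theta(n^{-1/2})$ bound for general feasibility proved in Proposition~\ref{prop:unif-gen}.
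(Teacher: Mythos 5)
Your proof is correct, but it takes a genuinely different route from the paper's. The paper's argument is essentially two lines: let $S^*$ be a maximum-cardinality feasible set with $k=|S^*|$; since values lie in $[0,1]$, the efficient welfare never exceeds $k$, while by Myerson's Lemma the revenue-optimal mechanism confined to subsets of $S^*$ (all of which are feasible by downward-closedness) earns $\sum_{i\in S^*}\E[\phi_i(x_i)^+]=k/4$, because $\phi_i(x_i)=2x_i-1$ is uniform on $[-1,1]$. You instead analyze VCG-L with reserve $\tfrac12$ applied to the \emph{actual} efficient allocation $S^*_w$, and prove via the monotone-threshold structure of the VCG allocation rule the correlation inequality $\Pr[i\in S^*_w,\ x_i\geq\tfrac12]\geq\tfrac12\Pr[i\in S^*_w]$. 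All of your steps check out: the threshold $T_i(x_{-i})$ exists by the exchange argument, the case split on whether $T_i$ is above or below $\tfrac12$ is right, and downward-closedness enters exactly where you say. Your route is longer but buys something real: it compares revenue to the welfare of the efficient set itself rather than to the crude cardinality bound, and your per-bidder inequality is precisely the uniform-distribution instance of the Chebyshev-type correlation inequality (Lemma~\ref{lem:chebyshev}) that drives Theorem~\ref{thm:hyper-reg.1} --- so your warm-up anticipates the Section~\ref{sec:dc} machinery, whereas the paper's warm-up does not. One small point worth a sentence in a polished write-up: you implicitly rely on VCG-L being truthful and individually rational so that its expected payment sum lower-bounds the optimal revenue; this is standard and is the same comparison the paper itself makes in proving Theorem~\ref{thm:hyper-reg.1}.
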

\begin{proof}
As before, define $S^* \in \feas$ to be a feasible
set of maximum cardinality, and let $k = |S^*|$.
The welfare of the efficient allocation is never
greater than $k$, and we will prove that the 
revenue of the optimal mechanism is at least 
$k/4$.

Let $\mathcal{M}'$ be the mechanism
that maximizes revenue subject to the 
constraint that the set of agents served
is always a subset of $S^*$. By Myerson's
Lemma, the expected revenue of $\mathcal{M}'$
is simply $\sum_{i \in S^*} \E[\phi_i(x_i)^+]$.
Recalling that $\phi_i(x_i)$ is uniformly
distributed in $[-1,1]$, we see that 
$\E[\phi_i(x_i)^+] = \frac14$ for each $i$,
and the result follows.
\end{proof}

As we aim to extend these results
to general distributions, it is worthwhile
to reflect on the aspects of the proofs that
were specific to the uniform distribution.
\begin{enumerate}
\item Our analysis of the revenue-to-welfare
ratio of the public project hinged on deriving
the asymptotic lower bound 
$\E[(\phi_1(x_1)+\cdots+\phi_n(x_n))^+] = 
\Theta(\sqrt{n})$. We achieved this using
the Central Limit Theorem. To extend this
step to more general --- and not necessarily
identical --- distributions, we require what
might be called \emph{anti-concentration} inequalities
for sums of independent random variables.
Versions of the Central Limit Theorem for
non-identical distributions exist, but they
are not general enough for our purposes.
(For instance, they require upper bounds
on the second moments, whereas we do not.)
Instead we will generalize a different anti-concentration
inequality, the Littlewood-Offord Theorem.
\item In the proofs of both propositions, we bounded
the expected welfare of the efficient allocation by
the cardinality of the maximum feasible set. This 
very simple upper-bounding technique was effective
because the uniform distribution is strongly
$c$-bounded for $c=2$. (The expected welfare
of any set of agents is at least half of its cardinality.)
When dealing with distributions that are not
strongly $c$-bounded, we need to develop a different
technique for upper-bounding the expected welfare
of the efficient allocation.\BKnote{A simple description of
the ``completely new technique'' would be convenient here.}
\end{enumerate}

\section{General feasibility constraints}
\label{sec:gen}

In this section, we consider arbitrary
feasibility constraints with $n$ agents
and extend
the $\Omega(n^{-1/2})$ lower bound
on the revenue-to-welfare ratio
(Proposition~\ref{prop:unif-gen})
from i.i.d.\ uniform
bids to more general distributions.
As noted at the end of Section~\ref{sec:uniform},
the key to proving such an extension is to
derive an inequality asserting that the distribution
of a sum
of independent random variables cannot be
too tightly concentrated around its expected
value. We first derive a
suitably general inequality in
Subsection~\ref{sec:lob}, and
we apply this inequality in
the following subsections.

\subsection{A generalization of the Littlewood-Offord Theorem}
\label{sec:lob}

A beautiful ``anti-concentration'' inequality
for independent random variables was proven
by~\citet{littlewood-offord} and strengthened
by~\citet{erdos}.
\begin{theorem}[\citep{littlewood-offord,erdos}] \label{thm:littlewood-offord}
For any real numbers $x_1,\ldots,x_n \geq 1$ and any
half-open interval $I$ of length 2,
the number of sums
$\sum_{i=1}^n \eps_i x_i$ that belong to $I$
as the vector
$(\eps_1,\ldots,\eps_n)$ ranges over $\{\pm 1\}^n$,
is at most $\binom{n}{\lfloor n/2 \rfloor}$.
\end{theorem}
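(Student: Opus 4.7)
The plan is to follow Erdős's classical antichain approach to Littlewood-Offord, which adapts cleanly to the half-open interval version stated here. I would identify each sign vector $\eps \in \{\pm 1\}^n$ with the subset $A(\eps) = \{i : \eps_i = +1\} \subseteq [n]$, and let $\A \subseteq 2^{[n]}$ be the family of subsets $A(\eps)$ for which $\sum_i \eps_i x_i \in I$. The goal is then to show $\A$ is an antichain under inclusion, so that Sperner's Theorem yields $|\A| \leq \binom{n}{\lfloor n/2 \rfloor}$.

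The main step is the antichain property. Suppose for contradiction that two sign vectors $\eps, \eps'$ satisfy $A(\eps) \subsetneq A(\eps')$ and both sums lie in $I$. Then
\[
\sum_{i=1}^n \eps'_i x_i - \sum_{i=1}^n \eps_i x_i = 2 \sum_{i \in A(\eps') \setminus A(\eps)} x_i \geq 2,
\]
using that the symmetric difference is nonempty and that each $x_i \geq 1$. On the other hand, since both sums lie in the half-open interval $I$ of length $2$, their difference is \emph{strictly} less than $2$. This contradicts the previous inequality, establishing that $\A$ is an antichain.

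Applying Sperner's Theorem to $\A$ then gives $|\A| \leq \binom{n}{\lfloor n/2 \rfloor}$, which is exactly the claimed bound. There is no serious obstacle here; the only subtle point — and the reason the statement uses a \emph{half-open} interval rather than a closed one — is that the argument requires the difference of two in-interval sums to be strictly less than $2$, so that the weak lower bound $\geq 2$ coming from $x_i \geq 1$ produces a contradiction even in the edge case where the extra indices all have $x_i = 1$.
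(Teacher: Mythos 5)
Your proof is correct and is exactly the classical Erd\H{o}s argument that the paper cites for this theorem and then adapts in its own Lemma~\ref{lem:lob.1}: identify sign vectors with subsets, show that a proper inclusion forces the two sums to differ by at least $2$ while two points of a half-open length-$2$ interval differ by strictly less than $2$, and conclude via Sperner's Theorem. Your remark about why the interval must be half-open is the right subtlety and matches the role the strict inequality plays in the paper's version.
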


In this section we
present a more general anti-concentration
inequality for sums of independent random variables.
To state our generalization, we must first specify a
few notations concerning deviations of random variables.

\begin{definition} \label{def:dev}
For a random variable $X$, we define its \emph{median} $m(X)$ to be
any number such that $\Pr(X < m(X)) \leq 1/2$
and $\Pr(X > m(X)) \leq 1/2$. We will denote
the absolute deviation of $X$
from its mean and median by
 \begin{align*}
 \md(X) &= \E |X - \E X | \\
 \mdm(X) &= \E |X - m(X)|.
 \end{align*}
Note that if there is more than one number $m(X)$
satisfying the definition of the median of $X$,
then the value of $\mdm(X)$ is independent of the
choice of $m(X)$.
\end{definition}

The following simple relations between $\md(X)$ and
$\mdm(X)$ are proven in Appendix~\ref{app:prob} 

\begin{lemma} \label{lem:mdm}
For any random variable $X$ and any
constant $a$,
\begin{equation} 
\mdm(X-a) = \mdm(a-X) = \mdm(X) 
\quad \mbox{and} \quad 
\md(X-a) = \md(a-X) = \md(X).
\end{equation}
Furthermore,
\begin{equation} 
\mdm(X) \leq \md(X) \leq 2 \mdm(X).
\end{equation}
\end{lemma}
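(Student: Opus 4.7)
The plan is to handle the translation/reflection identities first, then the two inequalities, all via elementary manipulations.

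For the first identity, the key observation is that shifting and reflecting $X$ shifts and reflects its mean and its median in the same way. Concretely, I would note that $\E(X-a) = \E X - a$, so $|X-a - \E(X-a)| = |X - \E X|$ pointwise; taking expectations gives $\md(X-a) = \md(X)$, and the argument for $\md(a-X) = \md(X)$ is identical after observing $\E(a-X) = a - \E X$. For the median versions, I would verify that if $m$ is a median of $X$, then $m-a$ is a median of $X-a$ (since $\Pr(X-a < m-a) = \Pr(X<m) \le 1/2$ and symmetrically), and similarly $a-m$ is a median of $a-X$. Plugging these into the definition of $\mdm$ again gives pointwise equality of the integrands, hence the claimed identities.

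For the inequality $\mdm(X) \le \md(X)$, I would invoke the standard fact that among all constants $c$, the quantity $\E|X-c|$ is minimized when $c$ is a median of $X$. Applying this with $c = \E X$ immediately yields $\mdm(X) = \E|X-m(X)| \le \E|X - \E X| = \md(X)$. If the reader prefers a self-contained argument, the minimizing property of the median follows from a one-line calculation with the derivative of $c \mapsto \E|X-c|$, or from a direct case analysis.

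For $\md(X) \le 2\,\mdm(X)$, I would use the triangle inequality pointwise: $|X - \E X| \le |X - m(X)| + |m(X) - \E X|$, so $\md(X) \le \mdm(X) + |m(X) - \E X|$. The remaining term $|m(X) - \E X|$ can be rewritten as $|\E(X - m(X))|$ and bounded by Jensen's inequality (or simply by $|\E Y| \le \E|Y|$) by $\E|X - m(X)| = \mdm(X)$, giving the factor of $2$. I do not expect any real obstacle here; the only point requiring mild care is the well-definedness of $\mdm(X)$ when the median is non-unique, which is automatic because $X$ is constant almost surely on the interval of medians, so all choices of $m(X)$ yield the same value of $\E|X-m(X)|$.
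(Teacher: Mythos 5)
Your proof is correct, but it reaches the two inequalities by a different route than the paper. For $\mdm(X)\le\md(X)$ the paper does not invoke the minimizing property of the median; instead it first normalizes to the case $m(X)\le 0=\E X$ (using the shift identities) and then computes $\md(X)-\mdm(X)=\E[X\cdot(\sgn(X)-\sgn(X-m))]\ge 0$ by a sign analysis. For $\md(X)\le 2\,\mdm(X)$ the paper again uses the normalization $\E X=0$, writing $\md(X)=\E|X|=2\E[X^+]\le 2\E[(X-m)^+]\le 2\,\mdm(X)$, rather than your triangle-inequality argument $|X-\E X|\le |X-m(X)|+|m(X)-\E X|$ combined with $|\E(X-m(X))|\le\E|X-m(X)|$. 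Your versions are the standard textbook arguments: the ``median minimizes $\E|X-c|$'' fact gives the first inequality in one line (and sidesteps any delicacy about the value of $\E[\sgn(X-m)]$ when the distribution has an atom at the median), and the triangle-inequality step for the second inequality requires no normalization at all. The paper's approach buys a short, self-contained computation once the reduction to $\E X=0$ is in place, but is not otherwise simpler. Your closing remark on well-definedness of $\mdm$ under non-unique medians is also correct (the distribution places no mass strictly between two medians, and the boundary probabilities are exactly $1/2$), though the lemma's statement in the paper already records this observation, so it need not be re-proved.
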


Our first anti-concentration result is stated in the following
proposition, whose proof is also deferred to Appendix~\ref{app:prob}.
\begin{proposition} \label{prop:lob.2}
If $X_1,\ldots,X_n$ are independent random variables
and $\mdm(X_i) \geq 1$ for all $i$, then
$\mdm(X_1+\ldots+X_n) \geq \frac{1}{12} \sqrt{n}.$
\end{proposition}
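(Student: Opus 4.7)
The plan is a two-step reduction. First, a symmetrization argument replaces the $X_i$ with independent symmetric random variables $Y_i$ satisfying $\mathbb{E}|Y_i|\geq 1$, and reduces the claim to a lower bound on $\mathbb{E}|\sum_i Y_i|$. Second, conditioning on the magnitudes $|Y_i|$ turns the sum into a signed Rademacher sum, to which a Khintchine-type inequality (the natural continuous analogue of the Erd\H{o}s--Littlewood--Offord bound of Theorem~\ref{thm:littlewood-offord}) can be applied.

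Concretely, let $X_i'$ be an independent copy of $X_i$ and set $Y_i = X_i - X_i'$. Each $Y_i$ is symmetric about zero, and since $c\mapsto \mathbb{E}|X_i-c|$ is minimized at the median $m(X_i)$, conditioning on $X_i'$ gives $\mathbb{E}|Y_i|\geq \mdm(X_i)\geq 1$. Writing $S=\sum_i X_i$, $S'=\sum_i X_i'$, and $m = m(S)=m(S')$, the triangle inequality $|S-S'|\leq |S-m|+|S'-m|$ yields $\mathbb{E}|\sum_i Y_i| = \mathbb{E}|S-S'| \leq 2\,\mdm(S)$. Thus it suffices to prove $\mathbb{E}|\sum_i Y_i|\geq \sqrt{n}/6$.

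By symmetry, $Y_i$ has the same distribution as $\epsilon_i |Y_i|$ for an independent Rademacher $\epsilon_i$, so $\sum_i Y_i$ has the same distribution as $\sum_i \epsilon_i |Y_i|$. Conditioning on $|Y_1|,\ldots,|Y_n|$, the (sharp $L^1$) Khintchine inequality gives $\mathbb{E}_\epsilon|\sum_i \epsilon_i |Y_i||\geq \tfrac{1}{\sqrt{2}}\sqrt{\sum_i |Y_i|^2}$. Combined with the Cauchy--Schwarz bound $\sqrt{\sum_i |Y_i|^2}\geq (\sum_i |Y_i|)/\sqrt{n}$, taking outer expectation gives $\mathbb{E}|\sum_i Y_i|\geq \tfrac{1}{\sqrt{2n}}\sum_i \mathbb{E}|Y_i|\geq \sqrt{n/2}$. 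Plugging into the reduction above yields $\mdm(\sum_i X_i)\geq \sqrt{n/8}$, comfortably stronger than the stated $\sqrt{n}/12$.

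The main obstacle lies in the Rademacher lower bound: because the coefficients $|Y_i|$ are random with no uniform lower bound, one cannot simply condition and invoke Theorem~\ref{thm:littlewood-offord} verbatim (which requires $|a_i|\geq 1$). Khintchine's inequality sidesteps this in one stroke, but if one prefers the more self-contained route of \emph{generalizing Erd\H{o}s's proof} alluded to in the paper's overview, the natural workaround is to bucket the indices so that within each bucket the partial sums of $|Y_i|$ lie in $[1,2]$, and then apply the Sperner--antichain argument bucket-by-bucket; this loses only a constant factor and produces exactly the $\sqrt{n}/12$ bound claimed in the proposition.
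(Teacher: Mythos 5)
Your proof is correct, but it takes a genuinely different route from the paper. The paper follows the combinatorial path advertised in its introduction: it couples each $X_i$ to a fair sign $\eps_i$ with $\eps_i(X_i - m(X_i)) \geq 0$, shows via Jensen that $\mdm(\sum_i X_i)$ dominates the spread of the $2^n$ conditional means $E(\sigma)$, and then bounds how many of these means can fall in any interval of length $2$ using Sperner's antichain lemma, exactly as in Erd\H{o}s's proof of the Littlewood--Offord theorem; this yields the constant $\tfrac{1}{12}$ after some explicit counting. You instead symmetrize ($Y_i = X_i - X_i'$), correctly verify $\E|Y_i| \geq \mdm(X_i)$ via the median's $L^1$-optimality and $\E|S - S'| \leq 2\,\mdm(S)$ via the triangle inequality, and then reduce to a Rademacher sum $\sum_i \eps_i |Y_i|$ with random coefficients, handled by Khintchine plus Cauchy--Schwarz. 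Each step checks out (in particular, for symmetric independent $Y_i$ the signs are indeed i.i.d.\ fair and independent of the magnitudes), and you get the stronger constant $\sqrt{n/8}$; even the elementary $L^1$--$L^2$--$L^4$ interpolation version of Khintchine, avoiding Szarek's sharp constant $1/\sqrt{2}$, still gives $\sqrt{n}/(2\sqrt{3}) > \sqrt{n}/12$. What your route buys is brevity, a better constant, and reliance only on standard probabilistic tools; what the paper's route buys is a self-contained argument from Sperner's lemma that stays faithful to the Littlewood--Offord narrative and avoids importing Khintchine. Your closing remark about bucketing indices so that partial sums of $|Y_i|$ land in $[1,2]$ is not needed for your main argument and is left vague, but since the Khintchine path is complete on its own, this does not affect correctness.
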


We leverage the proposition to derive the
following result.

\begin{theorem} \label{thm:lob.3}
Let $Y_1,\ldots,Y_n$ be any $n$-tuple of
independent random variables,
each with expectation zero.\BKnote{TO DO:
modify the theorem statement and proof to provide a
similar conclusion when the $Y_i$ are allowed
to have strictly positive expectations.}
Let $z_i = \E[Y_i^+]$ for $i=1,\ldots,n$.
Then
\begin{equation} \label{eq:thm.lob.3}
\E[(Y_1+\cdots+Y_n)^+] \geq \frac{z_1+\cdots+z_n}{48 \sqrt{n}}.
\end{equation}
\end{theorem}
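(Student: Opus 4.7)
The plan is to deduce Theorem~\ref{thm:lob.3} from Proposition~\ref{prop:lob.2} by bridging the gap between the hypothesis ``$\E[Y_i]=0$, $\E[Y_i^+]=z_i$'' and the hypothesis ``$\mdm(X_i)\geq 1$'' that the proposition requires. The first observation is immediate: since $\E[Y_i]=0$ we have $\E[Y_i^+]=\E[Y_i^-]=z_i$, so $\E|Y_i|=2z_i$ and therefore $\md(Y_i)=2z_i$, which by Lemma~\ref{lem:mdm} implies $\mdm(Y_i)\geq z_i$. Symmetrically, $\E[S]=0$ gives $\E[S^+]=\tfrac{1}{2}\md(S)\geq\tfrac{1}{2}\mdm(S)$, so the problem reduces to showing $\mdm(S)\geq \sum z_i/(24\sqrt{n})$.

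The warm-up is the case of equal $z_i$: if $z_i=c$ for all $i$, one rescales by $1/c$ so that each $Y_i/c$ satisfies $\mdm(Y_i/c)\geq 1$, applies Proposition~\ref{prop:lob.2} to conclude $\mdm(S)/c=\mdm(S/c)\geq \sqrt{n}/12$, and reads off $\mdm(S)\geq c\sqrt{n}/12=\sum z_i/(12\sqrt{n})$. This already beats the target, so the real work is handling heterogeneous $z_i$.

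For heterogeneous $z_i$, I would first record a useful ``subset monotonicity'' fact: for independent mean-zero random variables $A,B$, $\md(A+B)\geq \mdm(A)$, because $\E|A+B|=\E_B\E_A|A-(-B)|\geq \E_B\,\mdm(A)=\mdm(A)$, using that $\mdm(A)=\min_c\E|A-c|$. Applied to any subset $T\subseteq[n]$ with $S_T=\sum_{i\in T}Y_i$, this yields $\md(S)\geq \mdm(S_T)$. I would then partition the indices into dyadic buckets $B_j=\{i:z_i\in(M\cdot 2^{-j-1},\,M\cdot 2^{-j}]\}$ where $M=\max_i z_i$, so that inside each bucket the $z_i$ vary by at most a factor of $2$. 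Within each bucket the warm-up analysis applies to give $\mdm(S_{B_j})\geq (M/2^{j+1})\sqrt{|B_j|}/12$, and combining this with the subset monotonicity yields a lower bound on $\mdm(S)$.

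The main obstacle I foresee is that a literal dyadic bucketing loses a $\log n$ factor, whereas the target has no $\log n$ in it. I expect to avoid this either by (a) combining the bucket-by-bucket bounds more carefully, using Cauchy-Schwarz $\sqrt{\sum z_i^2}\geq \sum z_i/\sqrt{n}$ in a way that matches an underlying Khintchine-type inequality for weighted Rademacher sums that is already implicit in the proof of Proposition~\ref{prop:lob.2}, or (b) complementing the bucketing bound with the trivial inequality $\E[S^+]\geq \max_i z_i$ (which holds because $\E[(X+Y)^+]\geq \E[Y^+]$ whenever $X$ is mean-zero and independent of $Y$, by Jensen applied to the convex function $(\cdot)^+$) to cover the very spread-out regime while the bucketing handles the balanced regime. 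The clean constant $1/48$ is consistent with the factorization $1/48=\tfrac{1}{2}\cdot\tfrac{1}{2}\cdot\tfrac{1}{12}$: the first $\tfrac{1}{2}$ from $\E[S^+]=\md(S)/2$, the second from $\mdm\leq \md\leq 2\mdm$, and the $\tfrac{1}{12}$ from Proposition~\ref{prop:lob.2}, suggesting that the proposition is invoked exactly once on a well-chosen rescaled family.
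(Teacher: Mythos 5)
Your reduction to Proposition~\ref{prop:lob.2} is the right one and matches the paper's: $\mdm(Y_i)\geq z_i$, $\E[S^+]=\tfrac12\E|S|$, and a ``subset monotonicity'' step letting you lower-bound $\E|S|$ by the deviation of a well-chosen sub-sum (your version, $\E|A+B|\geq\min_c\E|A-c|=\mdm(A)$, is correct and is a clean substitute for the paper's $\sgn$-pairing argument showing $\E|Y_1+\cdots+Y_n|\geq\E|Y_1+\cdots+Y_k|$). But the core of the theorem is precisely the step you flag as an obstacle and do not resolve: exhibiting \emph{one} sub-collection $T$ with comparable $\mdm$-values such that $(\min_{i\in T}z_i)\sqrt{|T|}\gtrsim (\sum_i z_i)/\sqrt{n}$. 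Your dyadic bucketing loses a $\log n$ factor as stated, and neither suggested repair is carried out; option (b) in particular does not close the gap, since the hard regime (e.g.\ $z_i\asymp i^{-1/2}$, spread over $\Theta(\log n)$ buckets) is exactly where $\max_i z_i$ is far smaller than $\sum_i z_i/\sqrt{n}$. So as written the proposal proves only an $\Omega\bigl(\sum z_i/(\sqrt{n}\log n)\bigr)$ bound.

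The paper closes this gap without any bucketing: sort $z_1\geq\cdots\geq z_n$ and rescale so that $\max_k z_k\sqrt{k}=1$, attained at some index $k$. Then (i) $z_j\leq j^{-1/2}$ for all $j$, so $\sum_j z_j<2\sqrt{n}$, and (ii) the prefix $Y_1,\ldots,Y_k$ has $\mdm(Y_i)\geq z_i\geq z_k$ for every $i\leq k$, so Proposition~\ref{prop:lob.2} gives $\E|Y_1+\cdots+Y_k|\geq\mdm(Y_1+\cdots+Y_k)\geq\tfrac{1}{12}z_k\sqrt{k}=\tfrac{1}{12}$. Combining with subset monotonicity, $\E|S|\geq\tfrac1{12}\geq\frac{\sum_j z_j}{24\sqrt{n}}$, and halving gives the stated $1/(48\sqrt{n})$. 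Your bucketing \emph{can} be salvaged log-free --- with $T=\max_j 2^{-j}\sqrt{|B_j|}$ one has $|B_j|\leq\min(T^24^j,n)$, whence $\sum_j 2^{-j}|B_j|\leq\sum_j\min(T^22^j,n2^{-j})\leq 4T\sqrt{n}$ by splitting the geometric sums at $2^j\approx\sqrt{n}/T$ --- but this costs extra constant factors, and the paper's sorted-prefix normalization is the missing idea that makes a single application of the proposition suffice.
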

\begin{proof}
Assume, without loss of generality, that
$z_1 \geq z_2 \geq \cdots \geq z_n$. Also
assume that $\max_{1 \leq k \leq n} \{z_k \sqrt{k}\} = 1$.
The latter assumption is without loss of generality
because we can rescale all the random variables
$Y_1,\ldots,Y_n$ by the same positive scalar without
affecting the lemma's hypotheses or conclusion.

Our assumption that $\max \{z_k  \sqrt{k}\} = 1$
implies that $z_k \leq k^{-1/2}$ for all $k$, hence
$$
z_1+ \cdots+z_n \leq \sum_{k=1}^{n} k^{-1/2} < 2 \sqrt{n}.
$$
Consequently, the value of $(z_1+...+z_n)/(12 \sqrt{n}))$
is bounded above by $\frac16$.
If we can show that the expected value of $|Y_1+\cdots+Y_n|$
is bounded below by a constant, we are done,
since the relation
$\E[(Y_1+\cdots+Y_n)^+] = \frac12 \E |Y_1+\cdots+Y_n|$
holds for the mean-zero random variable $Y_1+\cdots+Y_n$.

We know, from Lemma~\ref{lem:mdm},
that  for all $i$,
$\mdm(Y_i) \geq \frac12 \E |Y_i| = z_i$.
Applying Proposition~\ref{prop:lob.2} to the random sum
$Y_1+\cdots+Y_k$,
it follows that the expected absolute value of that sum is
at least $\frac{1}{12} z_k \sqrt{k} = \frac{1}{12}$.

Next we show that
            $\E |Y_1+\cdots+Y_n| \geq \E |Y_1+...+Y_k|.$
We have
\begin{align}
\label{eq:thm.lob.2}  \E [ \sgn(Y_1+\cdots+Y_n) \cdot (Y_1+\cdots+Y_n) ] &= \E |Y_1+\cdots+Y_n| \\
 \E [ \sgn(Y_1+\cdots+Y_k) \cdot (Y_1+\cdots+Y_n) ] &= \E |Y_1+\cdots+Y_k| 
\nonumber \\
\label{eq:thm.log.3} 
& \quad \;\; + \E [ \sgn(Y_1+\cdots+Y_k) \cdot (Y_{k+1}+\cdots+Y_n) ] \\
& = \E |Y_1+ \cdots +Y_k|
\end{align}
where the last equality holds because $Y_1+\cdots+Y_k$ is independent of
$Y_{k+1}+\cdots+Y_n$, and the latter has zero expected value.

The left side of~\eqref{eq:thm.lob.2} is greater than or equal to the left side of~\eqref{eq:thm.lob.3}, because the inequality
$$  [\sgn(Y_1+\cdots+Y_n) - \sgn(Y_1+\cdots+Y_k)] \cdot (Y_1+\cdots+Y_n) \geq 0 $$
holds for all values of $Y_1,\ldots,Y_n$.
Indeed, whenever the quantity $[\sgn(Y_1+\cdots+Y_n)-\sgn(Y_1+\cdots+Y_k)]$
is nonzero, it has the same sign as $Y_1+\cdots+Y_n$.
Combining previous steps, we obtain
\[
\E |Y_1+\cdots+Y_n| \geq \E |Y_1+\cdots+Y_k| \geq \frac{1}{12}
\geq \frac{z_1+\cdots+z_n}{24 \sqrt{n}},
\]
and the theorem follows since
$\E[(Y_1+\cdots+Y_n)^+] = \frac12 \E |Y_1+\cdots+Y_n|.$
\end{proof}

\begin{theorem} \label{thm:lob.4}
Let $Y_1,\ldots,Y_n$ be any $n$-tuple of
independent random variables with
positive expectations $y_1,\ldots, y_n$.
Let $z_i = \E[Y_i^+]$ for $i=1,\ldots,n$.
Then
\begin{equation} \label{eq:thm.lob.4}
\E[(Y_1+\cdots+Y_n)^+] \geq \frac{z_1+\cdots+z_n}{96 \sqrt{n}}.
\end{equation}
\end{theorem}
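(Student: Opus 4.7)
The plan is to reduce to the mean-zero case already handled by Theorem~\ref{thm:lob.3} via centering. Set $Y_i' = Y_i - y_i$, $S = Y_1 + \cdots + Y_n$, and $S' = Y_1' + \cdots + Y_n' = S - c$, where $c = y_1 + \cdots + y_n > 0$. Since $c > 0$ we have $S \geq S'$ pointwise, so $\E[S^+] \geq \E[(S')^+]$; this converts the problem into one about the centered sum, to which Theorem~\ref{thm:lob.3} directly applies.

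The main obstacle is that the centered positive parts $\tilde z_i := \E[(Y_i')^+] = \tfrac12\,\E|Y_i - y_i|$ can be much smaller than the original $z_i = \E[Y_i^+]$: for example, if $Y_i$ is a positive constant then $\tilde z_i = 0$ while $z_i = y_i$. So applying Theorem~\ref{thm:lob.3} to $\{Y_i'\}$ does not, on its face, deliver a bound in terms of $\sum z_i$. I will get around this by writing $z_i = y_i + w_i$ with $w_i := -\E[Y_i^-] \geq 0$ (so $\E[Y_i] = z_i - w_i = y_i$), and splitting into two cases based on which of $\sum y_i$ or $\sum w_i$ carries most of $\sum z_i$.

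Case 1: $\sum y_i \geq \tfrac12 \sum z_i$. Here the elementary bound $\E[S^+] \geq \max(\E[S], 0) = \sum y_i$ immediately gives $\E[S^+] \geq \tfrac12 \sum z_i$, which exceeds $\frac{\sum z_i}{96\sqrt n}$ for every $n \geq 1$.

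Case 2: $\sum y_i < \tfrac12 \sum z_i$, i.e.\ $\sum w_i > \tfrac12 \sum z_i$. The key observation, which I expect to be the most delicate step, is the pointwise inequality $(Y_i - y_i)^+ \geq Y_i^+ - y_i$; I will verify it by a short case analysis on whether $Y_i \geq y_i$, $0 \leq Y_i < y_i$, or $Y_i < 0$ (using $y_i > 0$ crucially in the last two cases). Taking expectations yields $\tilde z_i \geq z_i - y_i = w_i$, and summing gives $\sum \tilde z_i \geq \sum w_i > \tfrac12 \sum z_i$. Applying Theorem~\ref{thm:lob.3} to the centered variables then gives $\E[(S')^+] \geq \frac{\sum \tilde z_i}{48\sqrt n} \geq \frac{\sum z_i}{96\sqrt n}$, and combining with $\E[S^+] \geq \E[(S')^+]$ finishes the proof.
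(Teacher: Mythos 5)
Your proof is correct and follows essentially the same route as the paper's: center the variables, apply Theorem~\ref{thm:lob.3} to the mean-zero sum, and combine with the trivial bound $\E[(Y_1+\cdots+Y_n)^+] \geq \sum_i y_i$ (the paper averages the two lower bounds via $\max\{A,B\}\geq \tfrac12(A+B)$ rather than splitting into cases, but that is cosmetic). If anything you are more careful than the paper on one point: the paper writes $z_i = z_i' + y_i$ as if $\E[(Y_i-y_i)^+]$ equaled $z_i - y_i$, whereas in general only the inequality $\E[(Y_i-y_i)^+] \geq z_i - y_i$ holds --- which is precisely the pointwise inequality $(Y_i-y_i)^+ \geq Y_i^+ - y_i$ you verify, and it is all that is needed.
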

\begin{proof}
We write $Y_i = Y'_i + y_i$ and $z_i = z'_i + y_i$,
for each $i$. Then the expectation of $Y'_i$ is zero, and
according to Theorem \ref{thm:lob.3}, we know that
\[E[(Y_1+\cdots+Y_n-\sum_i y_i)^+]\geq \frac{z_1+\cdots+z_n-\sum_i y_i}{48\sqrt{n}}\]
Note $\sum_i y_i$ is positive, so the inequality above gives
a lower bound on $E[(Y_1+\cdots+Y_n)^+]$.

In addition, we know
\[E[(Y_1+\cdots+Y_n)^+]\geq
\sum_i y_i,\]
as otherwise the expectation of $Y_1+\cdots+Y_n$ would be less than $\sum_i y_i$, a contradiction.

Thus,
\begin{align*}
E[(Y_1+\cdots+Y_n)^+] &\geq \max
\left\{\frac{z_1+\cdots+z_n-\sum_i y_i}{48\sqrt{n}},
\sum_i y_i \right\}\\
&\geq \frac12 \left( \frac{z_1+\cdots+z_n-\sum_i y_i}{48\sqrt{n}}+\sum_i y_i
\right)
\geq \frac{z_1+\cdots+z_n}{96\sqrt{n}}
\end{align*}
\end{proof}


\subsection{Public projects}
\label{sec:pub-proj}

In this section we analyze the revenue-to-welfare
ratio for a public project with $c$-bounded
distributions, as a step toward analyzing
environments with general feasibility constraints.

\begin{proposition} \label{prop:pub-proj}
In a public project environment whose $n$ agents
have independent $c$-bounded distributions, the
revenue of the optimal mechanism is at least
$1/(96 c \sqrt{n})$ times the expected welfare
of the efficient allocation.
\end{proposition}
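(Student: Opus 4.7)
The plan is to reduce the claim directly to Theorem~\ref{thm:lob.3}. In a public project, the only feasible allocations are ``serve everyone'' or ``serve no one,'' so by Myerson's Lemma the optimal mechanism chooses to serve everyone precisely when $\sum_i \ivv_i(v_i) \geq 0$ (breaking ties consistently on ironed intervals), and its expected revenue equals $\E[(\sum_{i=1}^n \ivv_i(v_i))^+]$. Likewise, since all $v_i \geq 0$, the welfare of the efficient allocation in a public project is just $\sum_{i=1}^n \E[v_i]$.

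Set $Y_i = \ivv_i(v_i)$. I would first record two standard facts: (i) $\E[Y_i]=0$, because $\E[\phi_i(v_i)]=0$ and ironing preserves the expectation; and (ii) $\E[Y_i^+] = \optrev(v_i)$, because the optimal single-bidder single-item auction serves the agent exactly when $\ivv_i(v_i) \geq 0$, and its revenue equals the expected ironed virtual value on that event. The $c$-bounded hypothesis then reads $c \cdot \E[Y_i^+] \geq \E[v_i]$.

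Now invoke Theorem~\ref{thm:lob.3} with $z_i = \E[Y_i^+]$: since the $Y_i$ are independent and mean-zero,
\[
\E\!\left[\left(\sum_{i=1}^n Y_i\right)^{\!+}\right] \;\geq\; \frac{z_1 + \cdots + z_n}{48\sqrt{n}} \;\geq\; \frac{\sum_i \E[v_i]/c}{48\sqrt{n}} \;=\; \frac{\text{welfare}}{48\,c\sqrt{n}},
\]
which is at least $1/(96\,c\sqrt{n})$ times the welfare, establishing the proposition.

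There is no serious technical obstacle here; the work is already packaged inside Theorem~\ref{thm:lob.3}. The only subtle point that merits care is justifying the identity $\E[\ivv_i(v_i)^+] = \optrev(v_i)$ and the vanishing mean $\E[\ivv_i(v_i)] = 0$ when the distribution is irregular, which requires appealing to the ironing construction rather than a direct integration-by-parts argument. Everything else --- the identification of welfare as $\sum_i \E[v_i]$, the pointwise form of the optimal public-project allocation, and the application of $c$-boundedness term by term --- is immediate.
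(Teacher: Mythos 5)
Your reduction to the anti-concentration theorem is the paper's strategy, and most of the supporting facts you cite (the form of the optimal public-project allocation, revenue $=\E[(\sum_i \ivv_i(\val_i))^+]$, welfare $=\sum_i \E[\val_i]$, and $\E[\ivv_i(\val_i)^+]=\optrev(\val_i)$) are correct. But your fact (i), that $\E[\ivv_i(\val_i)]=0$, is false in general, and this breaks the application of Theorem~\ref{thm:lob.3}, whose hypothesis is exactly that the $Y_i$ have mean zero. For a nonnegative value distribution supported on $[a,b]$ one has
\[
\E[\phi_i(\val_i)] \;=\; \E[\val_i] - \int_a^b \bigl(1-F_i(v)\bigr)\,dv \;=\; \E[\val_i] - \bigl(\E[\val_i]-a\bigr) \;=\; a ,
\]
the infimum of the support, which is strictly positive whenever the support is bounded away from $0$ (e.g.\ for the uniform distribution on $[1,2]$, $\phi(x)=2x-2$ has mean $1$). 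Ironing preserves this expectation, so $\E[Y_i]=a_i\geq 0$ but not necessarily $0$. Your final numerology ``$48$, hence at least $96$'' does not repair this: Theorem~\ref{thm:lob.3} simply does not apply to the uncentered $Y_i$.

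The fix is precisely what the paper does: it invokes Theorem~\ref{thm:lob.4}, which is the extension of Theorem~\ref{thm:lob.3} to independent variables with \emph{nonnegative} means. That theorem centers each $Y_i$, applies Theorem~\ref{thm:lob.3} to the centered variables, combines the result with the trivial bound $\E[(\sum_i Y_i)^+]\geq\sum_i y_i$, and averages --- which is exactly where the factor of $2$ (hence the constant $96$ rather than $48$) comes from. With Theorem~\ref{thm:lob.4} in place of Theorem~\ref{thm:lob.3}, the rest of your argument goes through verbatim and coincides with the paper's proof.
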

\begin{proof}
For each agent $i$, recall that $\ivv_i$ denotes
the agent's ironed virtual valuation function and that
$\rho(\val_i) = \E[\ivv_i(\val_i)^+]$
denotes the maximum revenue that a
monopolist can obtain by a selling a single
item to agent $i$; denote this number by $\rho_i$
henceforth in the proof. Our assumption
that the value distribution $F_i$ is
$c$-bounded implies that $\E \val_i \leq c \rho_i$,
so the expected welfare of the efficient allocation
is bounded by $c (\rho_1 + \cdots + \rho_n)$.

Applying Theorem~\ref{thm:lob.4} to the
random variables $Y_i = \ivv_i(\val_i)$
\BKnote{Not quite! The expectation of $Y_i$ might
be strictly positive, but this should only help us.
The statement and proof of Theorem~\ref{thm:lob.3}
need to be modified so that this step becomes easy.}
we conclude that
\begin{equation} \label{eq:pub-proj}
\E[(Y_1+\cdots+Y_n)^+] \geq \frac{\rho_1+\cdots+\rho_n}{96 \sqrt{n}}.
\end{equation}
This completes the proof, since the left side
is the optimal mechanism's expected revenue.
\end{proof}

\subsection{Strongly $c$-bounded distributions}
\label{sec:gen-scb}

In this section we prove the first part of
Theorem~\ref{thm:sqrt-approx}, which deals
with arbitrary feasibility constraints
and strongly $c$-bounded distributions.
The second part of the theorem, which deals
with public projects, was already proven
in Proposition~\ref{prop:pub-proj} above.

\begin{proposition} \label{prop:sqrt-approx}
If $\feas$ is arbitrary, and all agents have
strongly $c$-bounded distributions, then the
revenue of the optimal mechanism is at least
$1/(96 c \sqrt{n})$ times the expected welfare
of the efficient allocation.
\end{proposition}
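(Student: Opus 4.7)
The plan is to reduce the arbitrary-feasibility setting to a single fixed-set instance where Theorem~\ref{thm:lob.4} can be applied directly. The first step exploits strong $c$-boundedness in its \emph{pointwise} form: for each agent $i$ the inequality $\val_i \leq c\,\rho_i$ holds almost surely, where $\rho_i := \optrev(\val_i) = \E[\ivv_i(\val_i)^+]$ is the single-agent optimal revenue. Consequently, for every realized value profile and every feasible set $S \in \feas$,
\[
\sum_{i \in S} \val_i \;\leq\; c \sum_{i \in S} \rho_i,
\]
so the efficient welfare $\E\bigl[\max_{S \in \feas} \sum_{i \in S} \val_i\bigr]$ is bounded above by the deterministic quantity $c \cdot \max_{S \in \feas} \sum_{i \in S} \rho_i$. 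Let $S^*$ denote a feasible set attaining this maximum. This is exactly where strong $c$-boundedness (rather than plain $c$-boundedness) is essential: the max over $S$ has to be pulled out of the expectation, which requires a pointwise, not merely in-expectation, bound on each $\val_i$.

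Next, I would lower-bound the optimal revenue by the revenue of a restricted ``public-project-on-$S^*$'' mechanism whose allocation is always either $S^*$ or $\emptyset$. Ironed virtual surplus maximization over this restricted family serves $S^*$ precisely when $\sum_{i \in S^*} \ivv_i(\val_i) \geq 0$, so by Myerson's Lemma the restricted mechanism earns revenue
\[
\E\!\left[\left(\sum_{i \in S^*} \ivv_i(\val_i)\right)^{\!+}\right],
\]
and the optimal mechanism over $\feas$ earns at least this much.

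Finally, I would apply Theorem~\ref{thm:lob.4} (falling back to Theorem~\ref{thm:lob.3} in the mean-zero case) to the independent random variables $Y_i = \ivv_i(\val_i)$ for $i \in S^*$. Since $\E[Y_i^+] = \rho_i$ and $|S^*| \leq n$, this yields
\[
\E\!\left[\left(\sum_{i \in S^*} \ivv_i(\val_i)\right)^{\!+}\right]
\;\geq\; \frac{\sum_{i \in S^*} \rho_i}{96\sqrt{|S^*|}}
\;\geq\; \frac{\sum_{i \in S^*} \rho_i}{96\sqrt{n}}.
\]
Dividing by the welfare upper bound $c \sum_{i \in S^*} \rho_i$ gives the ratio $1/(96 c\sqrt{n})$ claimed in the proposition.

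The main obstacle I anticipate is merely a housekeeping one: verifying that the hypotheses of Theorem~\ref{thm:lob.4} apply to the ironed virtual values. Non-negativity of $\E[Y_i]$ follows because $\E[\ivv_i(\val_i)] = 0$ for any distribution (as ironing preserves the expectation), and independence of the $Y_i$ follows from independence of the $\val_i$. The genuine probabilistic content has already been absorbed into the anti-concentration inequality of Subsection~\ref{sec:lob}; the present proposition is essentially a two-line packaging of that result, with strong $c$-boundedness playing the role of a deterministic bridge from welfare to single-agent revenues.
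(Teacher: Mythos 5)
Your argument is correct and follows essentially the same route as the paper's proof: use the pointwise bound $\val_i \leq c\rho_i$ to cap the efficient welfare by $c\,\rho(S^*)$ for the feasible set $S^*$ maximizing $\rho(S)=\sum_{i\in S}\rho_i$, lower-bound the optimal revenue by the revenue of the public-project mechanism on $S^*$, and invoke Theorem~\ref{thm:lob.4} on the $Y_i=\ivv_i(\val_i)$. Your side remark that $\E[\ivv_i(\val_i)]=0$ (so one could even fall back on Theorem~\ref{thm:lob.3}) is a harmless refinement; the conclusion $1/(96c\sqrt{n})$ is obtained exactly as in the paper.
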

\begin{proof}
As in the preceding proof, let
$Y_i = \ivv_i(\val_i)$ and
$\rho_i = \E [ Y_i^+ ]$
for each agent $i$.
Our assumption that
the value distribution $F_i$ is strongly
$c$-bounded implies that $v_i$ is never
greater than $c \rho_i$.
For any set of agents $S$, let
$\rho(S) = \sum_{i \in S} \rho_i$
and define $S^*$ to be a feasible set that
maximizes $\rho$. Let $k = |S^*|$.
We will show that the revenue-to-welfare
ratio is at least $1/(96 c \sqrt{k})$, from which the
proposition follows \emph{a fortiori}.

As bidder $i$'s value never exceeds $c \rho_i$,
the value of any allocation $S \in \feas$ never
exceeds $c \rho(S)$, which is in turn bounded
above by $c \rho(S^*)$. Hence $c \rho(S^*)$ is
an upper bound on the expected welfare of the
efficient allocation.

Consider a mechanism $\mathcal{M}$
which maximizes revenue subject to the
constraint that the set of agents served is always
either $\emptyset$ or $S^*$. This is simply
an optimal mechanism for a public project with
agent set $S^*$, so its expected revenue
is $\E \left[ \left( \sum_{i \in S^*} Y_i \right)^+ \right]$.
Theorem~\ref{thm:lob.4} guarantees that
\[
\E \left[ \left( \sum_{i \in S^*} Y_i \right)^+ \right] \geq
\frac{\rho(S^*)}{96 c \sqrt{k}},
\]
and the proof is complete.
\end{proof}

\section{Downward-closed environments}
\label{sec:dc}

In this section we consider the revenue-to-welfare ratio
of environments with downward-closed feasibility constraints.
Our main result shows that the optimal mechanism's revenue
is a $\Omega(1/c)$ fraction of the expected welfare of the
efficient allocation, when the distributions are $c$-bounded
and hyper-regular. Recall that a hyper-regular distribution
is one such that $\vv(x)/x$ is a non-decreasing function of $x$,
where $\vv$ denotes the virtual value function.

\subsection{An inequality for monotonic functions of a random variable}
\label{sec:cii}

As in Section~\ref{sec:gen}, a probabilistic inequality lies
at the heart of our main result. In this case, the inequality
in question is a generalization of Chebyshev's Integral
Inequality~\citep{chebyshev}, which asserts that for two
monotonically non-decreasing function $f,g$ on an interval $(a,b)$,
\[
\frac{1}{b-a} \int_a^b f(x) g(x) \, dx \geq
\left[ \frac{1}{b-a} \int_a^b f(x) \, dx \right]
\left[ \frac{1}{b-a} \int_a^b g(x) \, dx \right].
\]
Our generalization is the following lemma.
\begin{lemma} \label{lem:chebyshev}
Suppose $f, g, h$ are three functions of a
real number $x$, such that $f,g$ are both
monotonically non-decreasing, and $h(x) \geq 0$
for all $x$. Then for any random variable $X$
such that $\E[h(X)] > 0$, we have:
\[
\frac{\E [ f(X) g(X) h(X) ]}{\E [ g(X) h(X) ]} \geq
\frac{\E [ f(X) h(X) ]}{\E [ h(X) ]}.
\]
\end{lemma}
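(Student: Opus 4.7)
The plan is to reduce the stated inequality to the classical weighted Chebyshev's sum inequality and prove the latter by the standard ``two independent copies'' trick. First, observe that clearing denominators (legitimate once one checks that $\E[g(X) h(X)]$ has the right sign, which in the applications of the lemma will be guaranteed by the context) transforms the desired inequality into the equivalent cross-multiplied form
\[
\E[f(X) g(X) h(X)] \cdot \E[h(X)] \;\geq\; \E[f(X) h(X)] \cdot \E[g(X) h(X)].
\]
This is the form I would actually prove. Note that it is cleanest to think of it as ordinary Chebyshev with respect to the re-weighted probability measure $d\mu = \frac{h}{\E[h(X)]} \, dP_X$, which is a legitimate probability measure precisely because $h \geq 0$ and $\E[h(X)] > 0$; under $\mu$ the inequality reads $\E_\mu[fg] \geq \E_\mu[f] \cdot \E_\mu[g]$.

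Next I would introduce two independent copies $X_1, X_2$ drawn from the distribution of $X$ and consider the nonnegative quantity
\[
Q \;=\; \bigl(f(X_1) - f(X_2)\bigr) \bigl(g(X_1) - g(X_2)\bigr) \, h(X_1) \, h(X_2).
\]
Since $f$ and $g$ are both monotonically non-decreasing, the two linear factors have the same sign (or are zero) for every realization of $(X_1, X_2)$, while $h(X_1) h(X_2) \geq 0$ by hypothesis. Hence $Q \geq 0$ pointwise, and therefore $\E[Q] \geq 0$.

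The final step is to expand $\E[Q]$, using independence of $X_1$ and $X_2$ to factor each of the four resulting expectations, and then using the fact that $X_1$ and $X_2$ are identically distributed to combine like terms. The four terms pair up to give
\[
\E[Q] \;=\; 2 \, \E[f(X) g(X) h(X)] \cdot \E[h(X)] \;-\; 2 \, \E[f(X) h(X)] \cdot \E[g(X) h(X)],
\]
and the inequality $\E[Q] \geq 0$ yields the cross-multiplied form above, which is what we needed. I do not anticipate any real obstacle; the only points requiring a bit of care are (i) verifying that the weighted measure is well-defined (which is exactly what the hypothesis $\E[h(X)] > 0$ provides), and (ii) being careful when dividing through by $\E[g(X) h(X)]$ to recover the stated ratio form, since $g$ is not assumed nonnegative. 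In the applications that follow, $g$ will have a definite sign on the effective support (where $h > 0$), so this last point will not cause trouble.
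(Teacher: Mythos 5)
Your proof is correct, but it takes a genuinely different route from the paper's. You prove the cross-multiplied form $\E[f(X)g(X)h(X)]\cdot\E[h(X)] \geq \E[f(X)h(X)]\cdot\E[g(X)h(X)]$ by the classical symmetrization trick: two independent copies $X_1,X_2$, the pointwise-nonnegative product $\bigl(f(X_1)-f(X_2)\bigr)\bigl(g(X_1)-g(X_2)\bigr)h(X_1)h(X_2)$, and expansion using independence. The paper instead works with a single copy: it sets $c=\E[f(X)h(X)]/\E[h(X)]$, locates a crossing point $x_0$ where $f$ passes through the level $c$, uses monotonicity of $g$ to get the pointwise inequality $(f(x)-c)\,g(x)\geq (f(x)-c)\,g(x_0)$, and takes expectations against the weight $h$. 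Both arguments establish exactly the same cross-multiplied inequality, and both share the same benign caveat when passing to the stated ratio form, namely that one must divide by $\E[g(X)h(X)]$, whose positivity is not among the lemma's hypotheses but holds in the application (there $g\geq 0$ and $h\geq 0$); you flag this explicitly, which the paper does not. Your symmetrization is arguably cleaner in that it avoids any discussion of where $f$ crosses the level $c$ (a point the paper glosses over slightly, since $x_0$ is only guaranteed to exist because $c$ is a weighted average of values of $f$), at the cost of introducing a product space; the paper's argument stays on the original space and generalizes more directly to the "reweighted measure" viewpoint you mention. Either way, the result and its strength are identical.
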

The version of Chebyshev's Integral Inequality
stated above is obtained by setting $h(x)=1$
and taking $X$ to be uniformly distributed in $(a,b)$.
We now present the proof of the lemma.
\begin{proof}
Let $c = \frac{\E[ f(X) h(X) ]}{[\E[ h(X) ]}.$
The inequality
$\frac{\E [ f(X) g(X) h(X) ]}{\E [ g(X) h(X) ]} \geq c$
is equivalent to $$\E[ (f(X) - c) g(X) h(X) ] \geq 0,$$
which we now prove.
Since $f(x)$ is non-decreasing,
there is a value $x_0$ such that
$f(x) \leq c$ for $x < x_0$ and
$f(x) \geq c$ for $x > x_0$.
Since $g(x)$ is also non-decreasing, the inequality
$(f(x) - c) \cdot (g(x) - g(x_0)) \geq 0$
holds for all $x$. Rewrite this inequality
as $(f(x)-c) \cdot g(x) \geq (f(x)-c) \cdot g(x_0)$
and use it to deduce:
\[
\E[ (f(X) - c) g(X) h(X) ] \geq
\E[ (f(X) - c) g(x_0) h(X) ] \geq
g(x_0) \cdot \left\{ \E[f(X) h(X)] - c \E[ h(X) ] \right\} = 0,
\]
which proves the lemma.
\end{proof}

\subsection{A revenue-to-welfare bound for hyper-regular
distributions}
\label{sec:hyper-reg}

\begin{theorem} \label{thm:hyper-reg.1}
If $\feas$ is downward-closed, and all agents
have $c$-bounded hyper-regular distributions,
then the revenue of the optimal mechanism is at
least $1/c$ times the expected welfare
of the efficient allocation.
\end{theorem}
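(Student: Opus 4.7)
The plan is to lower-bound the revenue of the optimal mechanism by the revenue of VCG with lazy reserves (VCG-L), whose expected revenue admits a convenient formula. Let $S^*(v)$ denote the welfare-maximizing feasible set, $r_i = \phi_i^{-1}(0)$ the monopoly reserve for bidder $i$, and $W(v) = \{i \in S^*(v) : v_i \geq r_i\}$ the set of VCG-L winners. The mapping $v_i \mapsto \mathbf{1}_{i \in W(v)}$ is non-decreasing (both factors $\mathbf{1}_{i \in S^*(v)}$ and $\mathbf{1}_{v_i \geq r_i}$ are), so Myerson's Lemma yields
\[
\mathrm{Rev}(\text{VCG-L}) \;=\; \sum_{i} \E\bigl[\phi_i(v_i)\,\mathbf{1}_{i \in W(v)}\bigr] \;=\; \sum_i \E\bigl[\phi_i(v_i)^+\,\mathbf{1}_{i \in S^*(v)}\bigr],
\]
where the second equality uses regularity ($\phi_i(v_i) \geq 0 \Leftrightarrow v_i \geq r_i$, so the summands agree on $\{i \in S^*(v)\}$ regardless of the sign of $\phi_i(v_i)$). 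Since the expected welfare of the efficient allocation is $\sum_i \E[v_i\,\mathbf{1}_{i \in S^*(v)}]$, the theorem reduces to the per-bidder inequality
\[
\E\bigl[\phi_i(v_i)^+ \,\mathbf{1}_{i \in S^*(v)}\bigr] \;\geq\; \tfrac{1}{c}\, \E\bigl[v_i\,\mathbf{1}_{i \in S^*(v)}\bigr]
\qquad\text{for every } i.
\]

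The key step is to derive this inequality by invoking Lemma~\ref{lem:chebyshev} conditionally on $v_{-i}$, with $X = v_i$, $f(x) = \phi_i(x)^+ / x$, $g(x) = \mathbf{1}_{i \in S^*(x,v_{-i})}$, and $h(x) = x$. Non-negativity of $h$ is immediate, and monotonicity of $g$ in $x$ follows from a short exchange argument that uses downward-closedness: raising $v_i$ increases the welfare of every feasible set containing $i$ by the same amount while leaving other feasible sets unchanged, so $i$ cannot be displaced from $S^*$ when $v_i$ rises. Monotonicity of $f$ is exactly where hyper-regularity enters: by regularity, $f$ vanishes on $[0, r_i)$, while on $[r_i, \infty)$ one has $f(x) = \phi_i(x)/x$, which is non-decreasing by the hyper-regular hypothesis; the two pieces match continuously at $x = r_i$. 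Substituting into Lemma~\ref{lem:chebyshev} and canceling the $x = h(x)$ factor from numerator and denominator on the left produces
\[
\frac{\E[\phi_i(v_i)^+\,\mathbf{1}_{i \in S^*(v)} \mid v_{-i}]}{\E[v_i\,\mathbf{1}_{i \in S^*(v)} \mid v_{-i}]} \;\geq\; \frac{\E[\phi_i(v_i)^+]}{\E[v_i]} \;=\; \frac{\rho_i}{\E[v_i]} \;\geq\; \frac{1}{c},
\]
where the last inequality invokes $c$-boundedness. Cross-multiplying, taking expectation over $v_{-i}$, and summing over $i$ gives the claim.

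The main obstacle is conceptual rather than computational: recognizing that the $c$-boundedness hypothesis, which is an unconditional inequality comparing $\E[\phi_i(v_i)^+]$ to $\E[v_i]$, can be upgraded to the conditional inequality we actually need---where each expectation is additionally weighted by the event $\{i \in S^*(v)\}$---and that this upgrade is exactly what a Chebyshev-type correlation inequality delivers, provided one can produce monotone functions whose products reproduce the relevant weighted expectations. Once that framing is in place, the only technical work is verifying the three monotonicity conditions above, and hyper-regularity is engineered precisely so that $\phi_i(x)^+/x$ is non-decreasing; the remainder of the argument is routine.
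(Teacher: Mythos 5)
Your proposal is correct and follows essentially the same route as the paper: lower-bound the optimal revenue by VCG-L, and apply Lemma~\ref{lem:chebyshev} with $f(x)=\phi_i(x)^+/x$, $h(x)=x$, and the (monotone) allocation indicator as $g$, using hyper-regularity for the monotonicity of $f$ and $c$-boundedness for the right-hand side. The only cosmetic difference is that you condition on $v_{-i}$ and treat $g$ as a deterministic indicator, whereas the paper averages over $v_{-i}$ by setting $g_i(x)=\Pr(S=\opt\mid v_i=x)$ and sums over feasible sets $S$.
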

\begin{proof}
Fix any allocation $S$, and let $\indics$ denote
the indicator random variable of the event that
$S$ is the welfare-maximizing allocation.
For any bidder $i$ let
$$
g_i(x) =
\E[ \indics \given \val_i=x ] =
\Pr(S=\opt \given \val_i=x).
$$
Note that for every $i \in S$,
the function $g_i(x)$ is non-decreasing
for the simple reason
that if $i$ is a bidder in the welfare-maximizing
set and her value increases, the welfare-maximizing
set remains the same.

For any $i \in S$ let us apply
Lemma~\ref{lem:chebyshev} to
the functions $f(x) = \vv_i(x)^+/x, \,
g(x) = g_i(x), \, h(x) = x$, and the
random variable $X = \val_i$. The function
$f(x)$ is non-decreasing because bidder $i$
has a hyper-regular distribution, and
the function $g(x)$ was proven to be
non-decreasing in the first paragraph
of this proof. Thus, we conclude that
\begin{equation} \label{eq:hyper-reg.1}
\frac{\E[\vv_i(\val_i)^+ g_i(\val_i)]}{\E[\val_i g_i(\val_i)]}
\geq
\frac{\E[\vv_i(\val_i)^+]}{\E[\val_i]}.
\end{equation}
The right side of the inequality is at least $\frac{1}{c}$, because
$\val_i$ is sampled from a $c$-bounded distribution.
To interpret the left side, recall the definition of $g_i$.
We have:
\begin{equation} \label{eq:hyper-reg.2}
\frac{\E[\vv_i(\val_i)^+ g_i(\val_i)]}{\E[\val_i g_i(\val_i)]}
=
\frac{\E[\vv_i(\val_i)^+ \E[ \indics \given \val_i]]}
{\E[\val_i \E[ \indics \given \val_i]]}
=
\frac{\E[\vv_i(\val_i)^+ \indics]}{\E[\val_i \indics]}.
\end{equation}
Combining~\eqref{eq:hyper-reg.1} with~\eqref{eq:hyper-reg.2}
and recalling that the right side of~\eqref{eq:hyper-reg.1}
is at least $\frac{1}{c}$, we have derived:
\begin{align}
\E[\vv_i(\val_i)^+ \indics] & \geq \tfrac{1}{c} \E[\val_i \indics].
\label{eq:hyper-reg.3}
\end{align}
Summing over all $i \in S$ and using the notations
$\vv^+(S) = \sum_{i \in S} \vv_i(\val_i)^+, \,
\val(S) = \sum_{i \in S} \val_i$, we obtain
\begin{equation} \label{eq:hyper-reg.4}
\E[\vv^+(S) \indics] \geq \tfrac{1}{c} \E[\val(S) \indics].
\end{equation}
Finally, summing over all feasible sets $S \in \feas$,
we find that
\begin{equation} \label{eq:hyper-reg.5}
\E[\vv^+(\opt)] \geq \tfrac{1}{c} \E[\val(\opt)].
\end{equation}
The right side is the expected welfare of the efficient
allocation. The left side is the expected revenue of the
mechanism that selects the efficient allocation and
then removes agents whose virtual value is negative,
i.e.\ the VCG-L mechanism that was defined at the end
of Section~\ref{sec:prelims}.
 The expected revenue of
the optimal mechanism is at least as great as that
of VCG-L, so our theorem is proved.
\end{proof}

Based on the proof, we immediately have the following corollary:
\begin{corollary}
If $\feas$ is downward-closed, and all agents
have $c$-bounded hyper-regular distributions,
then the revenue of the VCG-L Mechanism is at
least $1/c$ times the expected welfare
of the efficient allocation.
\end{corollary}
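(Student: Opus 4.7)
The plan is to observe that this corollary is an immediate byproduct of the proof of Theorem~\ref{thm:hyper-reg.1}, which already goes via the VCG-L mechanism rather than directly via the revenue-optimal mechanism. Concretely, inequality~\eqref{eq:hyper-reg.5} in that proof states
\[
\E[\vv^+(\opt)] \geq \tfrac{1}{c} \E[\val(\opt)],
\]
and the right-hand side is exactly the expected welfare of the efficient allocation. All that remains is to argue that the left-hand side is exactly the expected revenue of VCG-L (not merely an upper bound that is then dominated by the revenue-optimal mechanism).

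To do this I would first note that because each $F_i$ is hyper-regular, it is in particular regular, so the ironed and un-ironed virtual valuation functions coincide and $\phi_i$ itself is non-decreasing. Next, I would describe the allocation rule of VCG-L as selecting exactly those agents $i$ with $i \in \opt$ and $\val_i \geq r_i = \phi_i^{-1}(0)$, i.e.\ those $i \in \opt$ with $\phi_i(\val_i) \geq 0$. This allocation rule is monotone in each bidder's value (both membership in $\opt$ and the event $\phi_i(\val_i) \geq 0$ are monotone in $\val_i$ when $\feas$ is downward-closed and distributions are regular), so VCG-L is truthful and Myerson's Lemma applies.

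Applying Myerson's Lemma bidder by bidder and summing gives
\[
\mathrm{Rev}(\text{VCG-L}) = \sum_{i=1}^n \E\!\left[\phi_i(\val_i) \cdot \mathbf{1}_{i \in \opt} \cdot \mathbf{1}_{\phi_i(\val_i) \geq 0}\right] = \sum_{i=1}^n \E\!\left[\phi_i(\val_i)^+ \cdot \mathbf{1}_{i \in \opt}\right] = \E[\vv^+(\opt)].
\]
Combining with~\eqref{eq:hyper-reg.5} yields $\mathrm{Rev}(\text{VCG-L}) \geq \tfrac{1}{c} \E[\val(\opt)]$, which is the claim.

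There is essentially no obstacle here beyond careful bookkeeping: the main work was done inside the proof of Theorem~\ref{thm:hyper-reg.1}. The one place a reader might wish for a sentence of care is in the identification of the VCG-L revenue with $\E[\vv^+(\opt)]$, so I would make sure to spell out why the VCG-L allocation rule is monotone (hence justifies Myerson's Lemma) and why dropping the ``$+$'' losers from the efficient set is exactly what picks out $\phi_i(\val_i)^+$ inside the expectation.
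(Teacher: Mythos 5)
Your proposal is correct and matches the paper's own treatment: the paper derives the corollary as an immediate consequence of inequality~\eqref{eq:hyper-reg.5}, whose left side is identified in the proof of Theorem~\ref{thm:hyper-reg.1} as exactly the expected revenue of VCG-L. Your additional bookkeeping (monotonicity of the VCG-L allocation and the bidder-by-bidder application of Myerson's Lemma) just spells out details the paper leaves implicit.
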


\OMIT{
To state our generalization, we require the following
definition.
\begin{definition}  \label{def:comon}
If $Y,Z$ are two random variables on a probability
space $\Omega$, we say that $Z$ is \emph{stochastically
comonotonic} with $Y$ if the function
$z(y) = \E[Z | Y=y]$ is a monotonically non-decreasing
function of $y$.
\end{definition}
The following theorem is essentially equivalent to
Chebyshev's Integral Inequality; we provide a self-contained
proof here.
\begin{lemma} \label{lem:chebyshev}
If $Y,Z$ are random variables on a probability space, and
$Z$ is  stochastically comonotonic with $Y$,
then
\[ \E[YZ] \geq \E[Y] \cdot \E[Z]. \]
\end{lemma}
\begin{proof}
Let $\bar{y} = \E[Y]$, and let $z(y) = \E[Z | Y=y]$
as in Definition~\ref{def:comon}. Recall that
$z(y)$ is non-decreasing; in particular, for all $y$,
\begin{align*}
(y-\bar{y}) \, (z(y) - z(\bar{y})) \geq 0 \\
(y - \bar{y}) \, z(y) \geq (y - \bar{y}) \, z(\bar{y}).
\end{align*}
Now we have:
\begin{align*}
\E[YZ] - \E[Y] \cdot \E[Z] =
\E[(Y-\bar{y}) \, Z] & =
\E[ \E[ (Y-\bar{y}) \, Z \given Y ] ] \\
& = \E[ (Y-\bar{y}) \E[Z \given Y] ] \\
& = \E[ (Y-\bar{y}) \, z(Y) ] \\
&\geq \E[ (Y-\bar{y}) \, z(\bar{y}) ]
= \E[Y-\bar{y}] \, z(\bar{y}) = 0,
\end{align*}
which proves the lemma.
\end{proof}
}

\subsection{Ratio for non-hyper-regular distributions}
We use an example to show that
even in the setting of a single item auction with $n$ i.i.d.\ bidders,
the revenue-to-welfare ratio for $c$-bounded regular distributions
that are not hyper-regular
may tend to zero as $n$ grows to infinity. 
Defining $\delta>0$ by the equation
$\delta \ln^2 \delta = 1$, our distribution has
cumulative distribution function
$$F(x-\delta)=1-\frac{1}{x\ln^2 x}.$$
(Our choice of $\delta$ is to ensure that $F(x)\geq 0$ for all $x \geq 0$.)
A random variable $X$ with this distribution satisfies
$\E[X]=1/(\ln \delta) < 1.5$ while $\optrev(X) > 0.25$, 
so the distribution is $c$-bounded for any $c \geq 6$.

By computing the density
$$f(x-\delta)=(1-F(x-\delta))'=\frac{\ln x+2}{x^2 \ln^3 x},$$
we find that
$$
\phi(x-\delta)=x-\delta- \frac{1-F(x-\delta)}{f(x-\delta)}
=x-\delta- \frac{x\ln x}{\ln x+2}=-\delta+\frac{2x}{\ln x+2}.
$$
Note that $\phi$ is an increasing function; the distribution
is regular.
Let $$\cutoff = F^{-1} ( \tfrac{n-1}{n} ) \; > \; \frac{n}{\ln^2(n)} - \delta.$$
If $X_1,\ldots,X_n$ are
i.i.d.\ random variables with distribution $F$ and $X^* = 
\max \{X_1,\ldots,X_n\}$ then the event $X^* < \cutoff$ has 
probability $(\frac{n-1}{n})^{n} < 1/e$. By Myerson's Lemma,
the revenue of the optimal mechanism equals $\E[\phi(X^*)^+]$.
An upper bound on this quantity can be derived as follows.
\begin{align*}
\E[\phi(X^*)^+] &= 
\E[\phi(X^*)^+ \, \indith] \;\; +  \;\;
\E[\phi(X^*)^+ \, \indithc] \\
&\leq
\E[\phi(X^*)^+ \, \indith] \;\; + \;\; \phi(\cutoff)^+ \Pr(\cutoff > X^*) \\
&=
\E [\phi(X^*)^+ \, \indith] \;\; +  \;\; \tfrac{1}{e-1} \, \phi(\cutoff)^+ \Pr(\cutoff \leq X^*) \\
&=
\E \left[ \left(\phi(X^*)^+ + \tfrac{1}{e-1} \, \phi(\cutoff)^+ \right) \, \indith
\right] \\ 
&\leq
\frac{e}{e-1} \, \E[\phi(X^*)^+ \, \indith] \\
& \leq
\frac{e}{e-1} \, \E \left[ \frac{2 X^*}{\ln(X^* + \delta) + 2} \, \indith \right] 
\qquad \qquad \mbox{\em (since $\phi(x) > \frac{2x}{\ln(x+\delta)+2}$ for all $x$)}\\
& \leq
\frac{e}{e-1} \cdot \frac{2}{\ln(\cutoff + \delta) + 2} \, \E[X^* \, \indith] \\
& \leq
\frac{e}{e-1} \cdot \frac{2}{\ln(n/\ln^2(n)) + 2} \, \E[X^*]
\end{align*}
The revenue-to-welfare ratio $\E[\phi(X^*)^+] / \E[X^*]$
therefore converges to zero as $n \to \infty$.

\section{Acknowledgement}
We wish to express our gratitude to Anna Karlin for  many
influential conversations about this work.

\OMIT{
\subsection{Proof of the $c$ result}

\begin{theorem}
For every downward-closed environment,
if all $n$ bidders have hyper-regular and mutually independent valuation distributions
$X_1,\cdots, X_n$ with virtual valuation function $\phi_1,\cdots, \phi_n$,
and for every $i$, $\frac{E \phi^+_i}{E X_i}\geq c$, then
the expected optimal revenue is at least a $c$ fraction of the expected optimal
welfare in the environment.
\end{theorem}

\begin{proof}
Conditioning on the set $S$ being the welfare-maximizing allocation,
fix $\mathbf{v}_{-S}$, which is the valuations of the bidders in $\{i|i\not \in S\}$.
Denote the conditioned event by $U$. Denote $|S|$ by s.

For every alternative allocation $T$, since $S$ is welfare-maximizing,
we know $\sum_{i\in T} v_i \leq \sum_{i\in S} v_i$, which means,
$\sum_{i\in T-S} v_i \leq \sum_{i\in S-T} v_i$.
Since all the valuations of bidders outside $S$ are fixed,
the LHS of the inequality is a constant. Thus,
the conditioned event $U$ is described by a set of inequalities, each placing a lower bound on the sum of the subsets of $\{v_i| i\in S\}$.
In particular, $U$ corresponds to an "upward-closed" subspace of $R^s$.
For any two valuation vectors $\mathbf{v^0}, \mathbf{v^1}\in R^{n}$, if
for all $i\in S$, $v^0_i\geq v^1_i$, we denote as $\mathbf{v^0}\geq_{S} \mathbf{v^1}$.
``upward-closed'' means if $\mathbf{v^1}$ is a valid valuation vector in $U$, then
for all $\mathbf{v^0}\geq_{S} \mathbf{v^1}$, $\mathbf{v_0}$ is also valid. Denote the
multi-dimensional space of valid vectors $\mathbf{v}$ by $\mathcal{A}$.

Without lose of generality,
assume $S=\{1,2, \cdots, s\}$, and
for every $j>s$, the valuation $v_j$ is fixed.
Since $X_1,\cdots, X_{s}$ are mutually independent,
denote the joint probability distribution of $X_1,\cdots, X_{s}$
by $f_{X_1,\cdots, X_{s}}(v_1,\cdots, v_{s})=\Pi_{i=1}^{s} f_{i}(v_i)$.
Then \[f_1(v_1)=\int_{v_2} \cdots \int_{v_{s}} f_{X_1,\cdots, X_{s}} (v_1,\cdots, v_{s}) dv_2 \cdots dv_{s}.\]

In the conditioned event $U$, denote the new joint probability distribution by $f'_{X_1,\cdots, X_{s}}$, which satisfies:

\begin{equation*}
f'_{X_1,\cdots,X_{s}} (v_1,\cdots,v_{s})=
\left\lbrace
\begin{aligned}
 & 0 & if (v_1,\cdots, v_{s})\not \in \mathcal{A};\\
&\frac{1}{size(\mathcal{A})}\cdot f_{X_1,\cdots,X_{s}} (v_1,\cdots,v_{s}) & otherwise.
\end{aligned}
\right.
\end{equation*}

Here $size(\mathcal{A})=\int_{(v_1,\cdots,v_{s})\in \mathcal{A}} f_{X_1,\cdots, X_{s}} (v_1,\cdots, v_{s}) dv_1 \cdots dv_{s} $.
The new probability density function for bidder $1$ becomes \[
f'_1(v_1)=\int_{(v_1,v_2,\cdots, v_{s})\in \mathcal{A}} f'_{X_1,\cdots,X_{s}} (v_1,v_2,\cdots, v_{s}) dv_2\cdots dv_{s}.\]

Consider the subspace $\mathcal{B}_{v_1}=\{(v_2,\cdots, v_{s})| (v_1,v_2,\cdots,
v_{s}) \in \mathcal{A}\}$ and $\mathcal{B}_{v_1+\Delta}=\{(v_2,\cdots, v_{s})| (v_1+\Delta,v_2,\cdots,
v_{s}) \in \mathcal{A}\}$, where $\Delta>0$.
Since $\mathcal{A}$ is upward-closed, we have $\mathcal{B}_{v_1} \subseteq \mathcal{B}_{v_1+\Delta}$.

Note that
\begin{align*}
\frac{f'_1(v_1)}{f_1(v_1)}= &
\frac{1}{size(\mathcal{A})}\cdot
\int_{(v_1,v_2,\cdots, v_{s})\in \mathcal{A}} f_2(v_2)\cdots f_{s}(v_{s})dv_2\cdots dv_{s}\\
= &
\frac{1}{size(\mathcal{A})}\cdot
\int_{(v_2,\cdots, v_{s})\in \mathcal{B}_{v_1}} f_2(v_2)\cdots f_{s}(v_{s}) dv_2\cdots dv_{s}.
\end{align*}
And similarly,
\[
\frac{f'_1(v_1+\Delta)}{f_1(v_1+\Delta)}=
\frac{1}{size(\mathcal{A})}\cdot
\int _{(v_2,\cdots, v_{s})\in \mathcal{B}_{v_1+\Delta}} f_2(v_2)\cdots, f_{s}(v_{s})
.\]

From $\mathcal{B}_{v_1} \subseteq \mathcal{B}_{v_1+\Delta}$ we know that
$\frac{f'_1(v_1)}{f_1(v_1)}\leq \frac{f'_1(v_1+\Delta)}{f_1(v_1+\Delta)}$.
Define $G(t)=\frac{f'_1(t)}{f_1(t)}$, which is a non-decreasing function.

Since $\frac{\phi_1(v_1)^+}{v_1}$ is non-decreasing, and
$E \phi_1^+ \geq c\cdot EX_1$, there exists $b\geq 0$ such
that for all $v_1\geq b$,  $\frac{\phi_1^+(v_1)}{v_1} \geq c$
(otherwise $E\phi_1^+=\int_0^{\infty} \phi_1^+(v_1) f_1(v_1) dv_1 < c \int_0^{\infty} v_1 f_1(v_1) dv_1= c\cdot EX_1$, contradiction).
We choose the smallest $b$ if there are multiple choices. We have,
\begin{align}
&E\phi_1^+ - c\cdot EX_1\geq 0 \nonumber\\
\Rightarrow & \int_0^{\infty} (\phi_1^+(v)-cv) f_1(v) dv \geq 0 \nonumber\\
\Rightarrow &\int_b^\infty (\phi_1^+(v)-cv) f_1(v) dv \geq \int _0^b (cv-\phi_1^+(v)) f_1(v) dv.\label{eqn:phi_two_parts}
\end{align}
We also split $E\phi_1^{'+}-c EX'_1$ into two parts:
\begin{align*}
E\phi_1^{'+}-cEX'_1=& \int _0^{\infty} (\phi_1^+(v)-cv) f'_1(v) dv \\
=&\int_0^b (\phi_1(v)^+-cv) f_1(v) G(v)  dv+ \int_b^{\infty} (\phi_1^+(v)-cv) f_1(v) G(v) dv.
\end{align*}

Note the first part is negative, and the second part is positive. We want to show that the summation is positive, i.e.,
\[\int_0^b (cv-\phi_1^+(v)) f_1(v) G(v) dv \leq \int_b^{\infty} (\phi_1^+(v)-cv) f_1(v) G(v) dv.\]
Based on (\ref{eqn:phi_two_parts}), we know
\begin{align*}
&\int_0^b (cv-\phi_1^+(v)) f_1(v) G(v) dv \leq G(b) \int_0^b (cv-\phi_1^+(v)) f_1(v) dv \\
&\leq G(b) \int_b^{\infty} (\phi_1^+(v)-cv)f_1(v) dv
\leq \int_b^{\infty} (\phi_1^+(v)-cv) f_1(v) G(v) dv.
\end{align*}
Hence, we get $E\phi_1^{'+}\geq c\cdot EX'_1$. The same inequality holds for all bidder $i$ when $i\in S$. So we have $\frac{\sum_{i\in S} {E \phi_i^{'+}}}{\sum_{i\in S} E X'_i}\geq c$.
Since it is downward-closed environment, we can remove any bidder $i\in S$ if $\phi'_i(v_i)<0$. Thus, \[
E(\mathrm{OPT~REV})\geq E \sum_{i\in S} \left(\phi_i^{'+}\right) \geq
c \cdot \sum_{i\in S} E X'_i.\]

Taking expectations over $\mathbf{v}_{-S}$ and the winning set $S$ and
then summing over all the ratios proves the theorem.
\end{proof}

\subsection{Regular distribution}
We want to show that the revenue got by VCG-L (which is not larger than the optimal revenue) is larger than $d$ fraction of the maximum welfare got by VCG. Here
$d$ is a parameter related to $\phi^{-1}(0)$.
First, we prove the following lemma:

\begin{lemma}
Let $F$ be an regular distribution with monopoly price $r^*$ and revenue function $R(v)=v(1-F(v))$.
Let $V(t)$ denote the expected welfare of a single-item auction with a posted price of $t$ and a single bidder with valuation drawn from F.
If $E\phi^+\geq c \cdot E X$,
then for every non-negative number $t\geq 0$,
\[R(\max\{t,r^*\})\geq c'\cdot V(t).\]
Here
\begin{equation*}
c'=\left\lbrace
\begin{aligned}
    &\min\{\frac{1}{200\ln 10},c\}&\mathrm{if~} l=10,\\
    &\min\{\frac{c}{20 (\ln r^*-\ln c)},c\}&\mathrm{if~} l=EX,\\
    &\min\{\frac{1}{20\ln^2 r^*},c\}&\mathrm{if~} l=r^*,
\end{aligned}
\right.
\end{equation*}
where $l=\max\{EX,r^*, 10\}$.
\end{lemma}

\begin{proof}
Based on Myerson's Lemma,
$E \phi^+$ is
the expected optimal revenue.
Because $E \phi^+ \geq c\cdot  E X$,
and $\phi$ is increasing,
$E \phi^+= \int_{r^*} ^{\infty} \phi_i(x) dx
=R(r^*)=r^* (1-F(r^*))\leq r^*$.
Thus, $r^* \geq c \cdot EX$.

If $r^* \geq t$, we have \[R(\max\{t, r^*\}) = R(r^*) \geq c\cdot EX_i \geq c\cdot V(t)
\geq c' \cdot V(t).\]

Below we consider the $t > r^*$ case.
Let $l=\max\{EX,r^*, 10\}$. Below we want to show that
$t<10 l\ln^2l$.
Since $\phi(v) \geq 0$ for all $v\geq r^*$,
we know \[v\geq \frac{1-F(v)}{f(v)}\Rightarrow f(v)\geq \frac{1-F(v)}{v}.\]

Integrate both sides from $k=\max\{r^*,e\}$ to $t$, we have
\[F(t)-F(k)\geq (1-F(t)) (\ln t-\ln k),\]
which means,
\[1-F(k)=1-F(t)+F(t)-F(k)\geq (1-F(t)) (1+\ln t-\ln k).\]
Hence, $\frac{1-F(k)}{1+\ln t-\ln k} \geq 1-F(t)
\Rightarrow F(t)\geq 1- \frac{1}{\ln t}
\Rightarrow f(t) \geq \frac{1}{t\ln^2 t }$.

We integrate $xf(x)$ from $l$ to $20 l \ln ^2 l$:
\begin{align}
&\int_{l}^{20 l\ln^2l} v f(v) dv \nonumber\\
\geq& \int_{l}^{20 l\ln^2l} \frac{1}{\ln^2 v} dv \nonumber\\
>& \int_{l}^{20 l\ln^2l} \frac{\ln v-2}{\ln^3 v} dv\nonumber\\
=&\left.\left(\frac{v}{\ln^2 v}\right)\right|^{20 l\ln^2l}_{l}\nonumber\\
=&
\frac{20 l\ln^2l}{(\ln 20+\ln l+2\ln\ln l)^2}
-\frac{l}{\ln ^2 l}. \label{eqn:integral}
\end{align}

And when $l\geq 10$, $10\ln ^2 l-(\ln 20+\ln l+2 \ln\ln l)^2\geq 0$.
Thus, $\frac{20 l \ln^2 l}{(\ln 20+\ln l+2\ln\ln l)^2}\geq 2l$.
So (\ref{eqn:integral}) is larger than $l$.

However, the integral should be smaller than $EX$, which is smaller than $l$. This is
a contradiction. Thus, the distribution $F$ is not supported for $v\geq 20 l\ln ^2 l$,
which implies that for $t \geq r^*$, \[V(t)=(1-F(t)) E(v|v\geq t)\leq 20 l\ln^2 l (1-F(t)).\]

If
\begin{align}
c' \leq \frac{r^*}{20 l\ln l},\label{eqn:c_condition}
\end{align}
we will have
\[
r^* \geq c' \cdot 20 l\ln^2 l
\Rightarrow t \geq c' \cdot 20 l\ln ^2 l
\Rightarrow t(1-F(t)) \geq c'\cdot V(t)
\]

In order to satisfy (\ref{eqn:c_condition}), we may set
\begin{equation*}
c'=\left\lbrace
\begin{aligned}
    &\min\{\frac{1}{200\ln 10},c\}&\mathrm{if~} l=10,\\
    &\min\{\frac{c}{20 (\ln r^*-\ln c)},c\}&\mathrm{if~} l=EX,\\
    &\min\{\frac{1}{20\ln^2 r^*},c\}&\mathrm{if~} l=r^*.
\end{aligned}
\right.
\end{equation*}

Thus, we will have $R(t)\geq c'\cdot V(t)$, which solves the $t>r^*$ case.
\end{proof}

With this lemma, we can prove our result.
\begin{theorem}
For every regular downward-closed environment,
if for every bidder $i$, $E\phi^+_i \geq c E X_i$,
then the expected revenue of the VCG-L mechanism with monopoly reserves is at least $\min_i\{c'_i\}$ fraction of the expected efficiency of the VCG mechanism.
\end{theorem}

\begin{proof}
The proof is identical to the proof of Theorem 3.10 in the paper \cite{Dhang2010}.
\end{proof}
}


\appendix

\section{Appendix: Proof of Proposition~\ref{prop:lob.2}}
\label{app:prob}

We begin this appendix with a restatement and 
proof of Lemma~\ref{lem:mdm}.
\begin{lemma} 
For any random variable $X$ and any
constant $a$,
\begin{equation} \label{eq:mdm-shift}
\mdm(X-a) = \mdm(a-X) = \mdm(X) 
\quad \mbox{and} \quad 
\md(X-a) = \md(a-X) = \md(X).
\end{equation}
Furthermore,
\begin{equation} \label{eq:mdm}
\mdm(X) \leq \md(X) \leq 2 \mdm(X).
\end{equation}
\end{lemma}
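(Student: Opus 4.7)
The plan is to handle the invariance identities~\eqref{eq:mdm-shift} first, since they reduce to checking that $m(\cdot)$ and $\E[\cdot]$ transform in the obvious way under affine maps, and then to establish the two inequalities in~\eqref{eq:mdm} using a standard minimization property of the median together with the triangle inequality.

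For~\eqref{eq:mdm-shift}, I would first observe that if $m$ is any median of $X$, then $m-a$ is a median of $X-a$ (because $\Pr(X-a < m-a) = \Pr(X<m) \leq 1/2$, and similarly for the other tail), so
\[
\mdm(X-a) = \E|(X-a)-(m-a)| = \E|X-m| = \mdm(X).
\]
Similarly, $a-m$ is a median of $a-X$, so $\mdm(a-X) = \E|(a-X)-(a-m)| = \E|m-X| = \mdm(X)$. The analogous identities for $\md$ follow from linearity of expectation: $\E[X-a] = \E[X]-a$ and $\E[a-X] = a - \E[X]$, and in both cases the centered random variable has the same absolute value as $X - \E X$ (up to sign).

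For the first inequality $\mdm(X) \leq \md(X)$, the plan is to invoke the classical fact that the median minimizes $c \mapsto \E|X-c|$ over all $c \in \R$; applying this with $c = \E X$ immediately yields $\mdm(X) \leq \E|X - \E X| = \md(X)$. For readers who would rather see a self-contained argument, one can prove this minimization property directly by noting that for any $c$, the function $|x-c| - |x-m(X)|$ has opposite signs on $\{x<m(X)\}$ and $\{x>m(X)\}$, and using $\Pr(X<m(X)) \leq 1/2 \leq \Pr(X \leq m(X))$ to conclude that the expectation is non-negative.

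For $\md(X) \leq 2\,\mdm(X)$, I would use the triangle inequality inside the expectation, followed by Jensen's inequality:
\[
\md(X) = \E|X - \E X| \leq \E|X - m(X)| + |m(X) - \E X|,
\]
and then bound the second term via
\[
|m(X) - \E X| = |\E[X - m(X)]| \leq \E|X - m(X)| = \mdm(X).
\]
Combining the two bounds gives $\md(X) \leq 2\,\mdm(X)$. The only mildly subtle step is the minimization property of the median used to prove $\mdm(X) \leq \md(X)$; everything else is pure bookkeeping with definitions and the triangle inequality, so I do not anticipate any serious obstacle.
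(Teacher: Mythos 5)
Your proposal is correct, but it proves the two inequalities by a different route than the paper. The paper first uses the shift identities to reduce to the normalized case $\E X = 0 \geq m(X)$, then gets $\md(X) \leq 2\,\mdm(X)$ from the identity $\E|X| = 2\E[X^+]$ for mean-zero variables together with $(X-m)^+ \geq X^+$ when $m \leq 0$, and gets $\mdm(X) \leq \md(X)$ from a pointwise sign analysis of $X\sgn(X) - (X-m)\sgn(X-m)$. You instead work directly, without normalization: you obtain $\mdm(X) \leq \md(X)$ by invoking the $L^1$-optimality of the median (that $c \mapsto \E|X-c|$ is minimized at any median), and $\md(X) \leq 2\,\mdm(X)$ from the triangle inequality plus $|m(X) - \E X| = |\E[X - m(X)]| \leq \E|X-m(X)|$. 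Both arguments are sound; yours leans on a standard textbook fact (for which your sketch, using $\Pr(X < m) \leq \tfrac12 \leq \Pr(X \leq m)$, is adequate and is essentially the same sign-splitting idea the paper deploys inline for the special case $c = \E X$), while the paper's version is fully self-contained at the cost of the normalization step. Your treatment of the shift identities, verifying that $m-a$ and $a-m$ are medians of $X-a$ and $a-X$ respectively, is exactly what the paper dismisses as immediate from the definitions.
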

\begin{proof}
The relations~\eqref{eq:mdm-shift} are immediate
from the definitions.
\OMIT{
To prove $\mdm(X) \leq \md(X)$,
we consider the function $f(a) = \E|X-a|$. For any
fixed $x$, the derivative of $|x-a|$ with respect to $a$
is $-1$ if $x<a$ and $+1$ if $x > a$. Thus, for any
$a$ such that $\Pr(X=a) = 0$, the derivative of
}
To prove the inequalities in \eqref{eq:mdm}, 
it suffices to consider the case when $m(X) \leq 0 = \E X$,
since the general case can then be derived by setting
$a=\E X$ and applying the relations~\eqref{eq:mdm-shift}.
Let $m=m(X)$.
Under the hypothesis that $m \leq 0 = \E X$, we have
\[
\md(X) = \E |X| = 2 \E[X^+] \leq 2 \E[(X-m)^+] \leq 2 \mdm(X),
\]
which establishes one of the two inequalities in~\eqref{eq:mdm}.
To prove the other one, we use the relation $|x| = x \cdot \sgn(x)$
to obtain
\begin{align*}
\md(X) - \mdm(X) & = \E[X \cdot \sgn(X) - (X - m) \cdot \sgn(X-m)] \\ 
& = \E[X \cdot (\sgn(X) - \sgn(X-m))] \;\; + \;\; m \cdot \E[\sgn(X-m)] \\
& = \E [X \cdot (\sgn(X) - \sgn(X-m))] \;\; \geq \;\; 0,
\end{align*}
where the last inequality follows because $\sgn(X) - \sgn(X-m)$
is non-zero only when $m \leq X \leq 0$, in which case both $X$
and $\sgn(X) - \sgn(X-m)$ are non-positive.
\end{proof}

We now commence the proof of Proposition~\ref{prop:lob.2}.
We first need a definition and some preliminary lemmas.

\begin{definition}
If $\eps$ is a $\{\pm 1\}$-valued random variable
we say that $\eps$ is \emph{medially coupled to $X$}
if $\Pr(\eps=1) = \Pr(\eps=-1) = \frac12$ and
$\eps (X - m(X))$ is always non-negative. Note that
this is equivalent to saying that $\eps = \sgn(X-m(X))$
almost surely, except in case the event $X-m(X)=0$ has
positive probability.
\end{definition}

\begin{lemma}
If $\eps$ is medially coupled to $X$ then
\begin{equation} \label{eq:medcoup}
\E[X \given \eps=1] - \E[X \given \eps=-1] = 2 \mdm(X).
\end{equation}
\end{lemma}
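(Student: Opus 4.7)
The plan is to use the defining property of the coupling, namely $\eps \cdot (X - m(X)) = |X - m(X)|$ almost surely, and then take expectations after conditioning on $\eps$. Write $m = m(X)$ for brevity. Multiplying through and taking expectations gives $\E[\eps(X-m)] = \E|X - m| = \mdm(X)$.

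Next I would split the left side according to the value of $\eps$:
\[
\E[\eps(X-m)] = \E[(X-m)\mathbf{1}_{\eps=1}] - \E[(X-m)\mathbf{1}_{\eps=-1}].
\]
Because $\Pr(\eps=1) = \Pr(\eps=-1) = \tfrac{1}{2}$, each of these two terms equals $\tfrac{1}{2}\E[X-m \mid \eps = \pm 1]$, so
\[
\mdm(X) = \tfrac{1}{2}\left(\E[X - m \mid \eps = 1] - \E[X - m \mid \eps = -1]\right).
\]
The two $m$ constants cancel, and multiplying by $2$ yields the claimed identity. Equivalently, one can note at the outset that $\E[\eps] = 0$ so $\E[\eps \cdot m] = 0$, which gives $\E[\eps X] = \mdm(X)$, and then expanding $\E[\eps X] = \tfrac{1}{2}(\E[X\mid \eps=1] - \E[X\mid \eps=-1])$ finishes the proof.

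There is essentially no obstacle; the only subtlety worth flagging is the tie set $\{X = m\}$, which may have positive probability. On this event the definition of medial coupling leaves $\eps$ free to be either $+1$ or $-1$ (so long as the marginal is balanced), but this ambiguity is harmless for the identity because $X - m$ vanishes there, so both $\eps(X-m) = |X-m|$ and the conditional computation above are unaffected by how the tie is broken.
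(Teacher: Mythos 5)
Your proposal is correct and follows essentially the same route as the paper's proof, which likewise reduces the claim to the identity $\E[X\given\eps=1]-\E[X\given\eps=-1]=2\E[\eps X]=2\E|X-m(X)|$ using the balanced marginal of $\eps$ and the defining relation $\eps(X-m(X))=|X-m(X)|$. Your remark about the tie set $\{X=m\}$ is a harmless (and accurate) elaboration of a point the paper leaves implicit.
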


\begin{proof}
To prove~\eqref{eq:medcoup}, it suffices to observe
that
\begin{equation}
\E[X \given \eps=1] - \E[X \given \eps=-1] = 2 \cdot \E[\eps X] = 2 \E |X - m(X)|.
\end{equation}
\end{proof}

The proof of the following lemma uses the
technique introduced by Erd\H{o}s in his
proof of Theorem~\ref{thm:littlewood-offord}.

\begin{lemma} \label{lem:lob.1}
Suppose $X_1,\ldots,X_n$ and $\eps_1,\ldots,\eps_n$ are
two $n$-tuples of random variables such that for all $i$, $\mdm(X_i) \geq 1$
and $\eps_i$
is medially coupled to $X_i$. Suppose the
coupled pairs
$\{(X_i,\eps_i)\}_{i=1}^n$ are mutually independent of
one another.
For every sign vector $\sigma \in \{\pm 1\}^n$, let
$$
E(\sigma) = \E[X_1 + \cdots + X_n \given (\eps_1,\ldots,\eps_n)=\sigma].
$$
If $I$ is any half-open interval of length 2,
the number of sign vectors $\sigma$ such that
$E(\sigma) \in I$ is at most $\binom{n}{\lfloor n/2 \rfloor}$.
\end{lemma}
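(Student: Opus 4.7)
The plan is to adapt Erdős's proof of the classical Littlewood--Offord theorem, with Sperner's antichain argument as the main engine. First I would exploit the independence of the pairs $\{(X_i,\eps_i)\}$ to decompose $E(\sigma) = \sum_{i=1}^n \E[X_i \mid \eps_i = \sigma_i]$. Setting $a_i = \E[X_i \mid \eps_i = +1]$ and $b_i = \E[X_i \mid \eps_i = -1]$, identity~\eqref{eq:medcoup} combined with the hypothesis $\mdm(X_i) \geq 1$ yields the key gap estimate $a_i - b_i = 2\mdm(X_i) \geq 2$ for every $i$. This replaces the role of ``$x_i \geq 1$'' in the classical statement.

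Next I would parametrize sign vectors by their positive-coordinate set $S(\sigma) = \{i : \sigma_i = +1\}$ and observe that whenever $S(\sigma) \subsetneq S(\sigma')$,
$$
E(\sigma') - E(\sigma) = \sum_{i \in S(\sigma') \setminus S(\sigma)} (a_i - b_i) \;\geq\; 2 \, |S(\sigma') \setminus S(\sigma)| \;\geq\; 2.
$$
Consequently, if $I$ is a half-open interval of length $2$, at most one of $E(\sigma)$ and $E(\sigma')$ can lie in $I$: the larger value would have to sit at or beyond the right endpoint, which the half-open interval excludes. Hence the family $\mathcal{A} = \{S(\sigma) : E(\sigma) \in I\}$ is an antichain in the Boolean lattice on $[n]$, and Sperner's theorem bounds $|\mathcal{A}|$ by $\binom{n}{\lfloor n/2 \rfloor}$, which is exactly the claim.

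The subtlest point -- the only place requiring care -- is the factorization of the conditional expectation in the first step. Because only the coupled pairs $(X_i,\eps_i)$ are assumed to be independent of one another (not the $X_i$ from the full vector $(\eps_1,\ldots,\eps_n)$ jointly in some stronger sense), one must verify that conditioning on the event $\{(\eps_1,\ldots,\eps_n)=\sigma\}$ reduces to conditioning each $X_i$ on its own coordinate $\eps_i=\sigma_i$ and then summing. This follows cleanly from the mutual independence hypothesis, but it is the only non-combinatorial step and the one I would write out in detail. Once it is in hand, the remainder of the argument tracks Erdős's proof almost verbatim, with $a_i - b_i$ playing the role of $2x_i$ and the half-openness of $I$ handling the boundary case when $\mdm(X_i) = 1$ and $S(\sigma)$, $S(\sigma')$ differ by a single element.
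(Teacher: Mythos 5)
Your proposal is correct and follows essentially the same route as the paper's proof: the same factorization $E(\sigma)=\sum_i \E[X_i \given \eps_i=\sigma_i]$ via independence of the coupled pairs, the same gap estimate $\geq 2$ from the identity $\E[X_i\given\eps_i=1]-\E[X_i\given\eps_i=-1]=2\mdm(X_i)$, and the same appeal to Sperner's theorem (the paper phrases the antichain condition directly on sign vectors via the partial order $\sigma\succ\sigma'$, which is equivalent to your positive-coordinate-set formulation). No substantive differences.
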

\begin{proof}
For any $i$ we have
  $$\E[X_i \given (\eps_1,\ldots,\eps_n)=\sigma] =
     \E[X_i \given \eps_i=\sigma_i],$$
since $\eps_j$ is independent of $\eps_i$ for $j \neq i$.
Thus,
  $$E(\sigma) = \sum_{i=1}^n \E[X_i \given \eps_i = \sigma_i].$$
For any two sign vectors $\sigma,\sigma'$, we have
\begin{align*}
E(\sigma)-E(\sigma') &= \sum_{i=1}^n \left(
\E[X_i \given \eps_i = \sigma_i] - \E[X_i \given \eps_i = \sigma'_i]
\right) \\
&= \sum_{i=1}^n (\sigma_i - \sigma'_i) \mdm(X_i),
\end{align*}
where the last line follows from~\eqref{eq:medcoup}.
Thus, if  $\sigma \succ \sigma'$ (meaning, $\sigma_i \geq \sigma'_i$
for all $i$, and the inequality is strict for at least one $i$)
it follows that
$E(\sigma)-E(\sigma') \geq 2$.
The lemma now follows from Sperner's Lemma~\citep{sperner},
which states that any collection of more than
$\binom{n}{\flr{n/2}}$ sign vectors contains a
pair such that $\sigma \succ \sigma'$.
\end{proof}

Recall the statement of Proposition~\ref{prop:lob.2}.
\begin{proposition}
If $X_1,\ldots,X_n$ are independent random variables
and $\mdm(X_i) \geq 1$ for all $i$, then
$\mdm(X_1+\ldots+X_n) \geq \frac{1}{12} \sqrt{n}.$
\end{proposition}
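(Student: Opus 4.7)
The plan is to introduce, for each $i$, an auxiliary $\{\pm 1\}$-valued random variable $\eps_i$ medially coupled to $X_i$, with the coupled pairs $\{(X_i,\eps_i)\}$ mutually independent. Set $S=X_1+\cdots+X_n$ and, for each sign vector $\sigma\in\{\pm 1\}^n$, let $E(\sigma)=\E[S\mid (\eps_1,\ldots,\eps_n)=\sigma]$ as in Lemma~\ref{lem:lob.1}. The strategy is to lower bound $\mdm(S)$ by $\mdm(E(\sigma))$, where $E(\sigma)$ is viewed as a random variable under the uniform distribution on $\{\pm 1\}^n$, and then exploit Lemma~\ref{lem:lob.1} to lower bound the latter.

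For the first step, observe that for any constant $c$, the tower property of conditional expectation together with Jensen's inequality applied to the convex function $|x-c|$ give $\E|S-c|\geq \E_{\sigma}|E(\sigma)-c|$. Since the median minimizes $\E|Y-c|$ over $c$, taking the minimum of both sides yields
$$
\mdm(S)\;\geq\;\min_{c}\E_{\sigma}|E(\sigma)-c|\;=\;\mdm(E(\sigma)).
$$
For the second step, Lemma~\ref{lem:lob.1} says that at most $B:=\binom{n}{\lfloor n/2\rfloor}$ of the $2^n$ values $E(\sigma)$ lie in any half-open interval of length $2$. For any real $c$ and any positive integer $T$, the interval $[c-T,c+T)$ is the disjoint union of $T$ half-open intervals of length $2$, hence
$$
\Pr_{\sigma}\!\bigl(|E(\sigma)-c|<T\bigr)\;\leq\;\frac{T\,B}{2^n}.
$$
The standard estimate $B\leq 2^n/\sqrt{n}$ then gives $\Pr_\sigma(|E(\sigma)-c|\geq T)\geq 1-T/\sqrt{n}$. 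Choosing $T$ of order $\sqrt{n}$ so that this probability is bounded below by a constant, and using $\E|E(\sigma)-c|\geq T\cdot \Pr(|E(\sigma)-c|\geq T)$, yields $\mdm(E(\sigma))\geq \sqrt{n}/12$, with ample slack in the constant for the remaining small cases.

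The substantive combinatorial content has already been extracted in Lemma~\ref{lem:lob.1} via the Sperner/Erd\H{o}s argument, so no new ideas are required here; the remaining obstacle is merely bookkeeping, namely verifying the elementary inequality $\binom{n}{\lfloor n/2\rfloor}\leq 2^n/\sqrt{n}$ with enough slack that $T=\lfloor\sqrt{n}/2\rfloor$ (or a similar choice) produces the constant $1/12$ uniformly for all $n\geq 1$. Degenerate small cases (for example $n=1$, where $\mdm(X_1)\geq 1\geq \sqrt{1}/12$) can be checked by hand, and everything else falls out of the displayed probability bound.
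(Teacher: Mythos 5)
Your proposal is correct and follows essentially the same route as the paper's proof: medially coupled signs, conditional Jensen to reduce to the discrete variable $E(\sigma)$, and Lemma~\ref{lem:lob.1} applied to a tiling of a length-$\Theta(\sqrt{n})$ interval by length-$2$ intervals, finished with a bound on $\binom{n}{\lfloor n/2\rfloor}$. The only differences are cosmetic (you center at an arbitrary $c$ and minimize rather than fixing $c=m(X)$, and you use $\binom{n}{\lfloor n/2\rfloor}\leq 2^n/\sqrt{n}$ with $T=\lfloor\sqrt{n}/2\rfloor$ in place of the paper's $k=\lceil\sqrt{n}/3\rceil-1$ count), and your handling of the small-$n$ cases is adequate.
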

\begin{proof}
Let $\eps_1,\ldots,\eps_n$ be independent random variables
medially coupled to $X_1,\ldots,X_n$.
Let $X=X_1+\ldots+X_n$ and $m = m(X)$. Jensen's convex function inequality
applied to the random variable $|X-m|$ implies
\[
\E |X-m| = \E_{\sigma} \left[
\E ( |X-m| \; \given \; \sigma ) \right] \geq
\E_{\sigma} \left[ |E(\sigma) - m| \right].
\]
We will prove a lower bound on the quantity
appearing on the right-hand side.
Let $k = \lceil \frac{\sqrt{n}}{3} \rceil -1 $.
 and apply
Lemma~\ref{lem:lob.1} to the intervals
$I_j = (m+2j-1,m+2j+1]$ for $-k \leq j \leq k$.
Each of these intervals contains at most $\binom{n}{\flr{n/2}}$
of the numbers $E(\sigma)$. Hence, there are at most
$(2k+1) \binom{n}{\flr{n/2}}$
sign vectors $\sigma$ such that
$m-2k-1 < E(\sigma) \leq m+2k+1$.
The inequality
\[
(2k+1) \binom{n}{\flr{n/2}} =
\left(2 \cel{\tfrac{\sqrt{n}}{3}} - 1\right) \binom{n}{\flr{n/2}} <
\frac34 2^n
\]
can be verified by exhaustive enumeration over small
values of $n$ combined with the asymptotic estimate
$\binom{n}{\flr{n/2}} \sim \sqrt{\frac{2}{\pi n}} \cdot 2^n$
as $n \to \infty$. Therefore, we have
\[
\E_{\sigma} [|E(\sigma)-m|] \geq (2k+1) \Pr_\sigma(|E(\sigma)-m| \geq 2k+1)
\geq \frac14 \left(2 \cel{\tfrac{\sqrt{n}}{3}} - 1 \right)
\geq \frac14 \cel{\tfrac{\sqrt{n}}{3}}
\geq \tfrac{1}{12} \sqrt{n}.
\]
\BKnote{It's a problem that we have the trailing $\frac14$ in this
theorem. Try to get simply a constant.}
\end{proof}
\end{document}